%% file: main.tex
\documentclass[aps,
prx,
twocolumn,superscriptaddress,nofootinbib,longbibliography,floatfix]{revtex4-2}

\usepackage[T1]{fontenc}
\usepackage{amsmath,amssymb,amsthm,bm,mathtools}
\usepackage{graphicx}
\usepackage{microtype}
\usepackage{booktabs}
\usepackage{braket}
\usepackage[colorlinks=true,linkcolor=blue,citecolor=blue,urlcolor=blue]{hyperref}

\newtheorem{theorem}{Theorem}

\newtheorem{remark}{Remark}

\newcommand{\Uc}{U_c}
\newcommand{\avg}[1]{\langle #1\rangle}
\newcommand{\Jpf}{\mathcal J_{\rm PF}}
\newcommand{\ekin}{\varepsilon_{\rm kin}^{(0)}}

\newcommand{\nmode}{\nu_{\rm m}}
\newcommand{\Rp}{\mathcal R_p}
\newcommand{\rb}{r_{\rm b}}

\newcommand{\Tr}{\operatorname{Tr}}

\newcommand{\mytitle}{
When Can a Cavity Move a Mott Transition? A Spectral-Density Criterion within Gutzwiller Theory
}

\begin{document}

\title{\mytitle}

\author{Nikhil Vamsodharakan Seshadri}
\affiliation{Los Alamos National Laboratory, Los Alamos, New Mexico 87545, USA}

\author{Yu Zhang}
\email{zhy@lanl.gov}
\affiliation{Los Alamos National Laboratory, Los Alamos, New Mexico 87545, USA}

\date{\today}

\begin{abstract}
Can vacuum electromagnetic fluctuations shift a bulk Mott transition? Within the Gutzwiller variational method, we derive a criterion that separates collective spectroscopic hybridization from thermodynamic phase control. We show that a Mott transition shifts only when the electromagnetic environment supplies finite thermodynamic spectral weight with bond-scale variation. A joint frequency--spatial Pauli--Fierz density gives the leading shift. Surface phonon polaritons yield a $d^{-3}$-to-$d^{-5}$ crossover, while finite-coordination variational Monte Carlo supports the predicted critical coefficient and $M/N$ scaling.
\end{abstract}

\maketitle

Correlation-driven metal--insulator transitions or phase transitions at large are central problems of condensed-matter physics~\cite{Mott1968, Hubbard1963, Imada1998, Georges1996, Kotliar2006}. Quantized electromagnetic environments offer a complementary route to modify collective states and microscopic interactions~\cite{Schlawin2022, Hubener2021, Sentef2018, Curtis2019, Kiffner2019, Mazza2019, Weight:2023aa, Bauman:2025aa, Keren:2026aa}, yet whether vacuum fields can alter an equilibrium bulk phase remains unsettled. A normalized bright mode can exhibit a finite collective Rabi splitting in the thermodynamic limit even though its contribution to the energy density vanishes. Recent quantum Monte Carlo (QMC) found that a properly normalized single mode is irrelevant at the honeycomb-lattice Mott critical point, while the photon spectrum remains a sensitive probe~\cite{Inacio2026}. Tensor-network and spectroscopic studies have likewise highlighted the interplay between cavity fields and correlated lattice dynamics~\cite{Nakamoto2025,GrunwaldSpectroscopy2025}. Conversely, realistic multimode calculations indicate that off-resonant changes of correlated exchange are controlled by integrated photonic spectral weight and can be enhanced by confined surface modes~\cite{GrunwaldMultimode2026, Eckhardt2025, bretscher2026}. What is missing is a criterion that connects the frequency and spatial structure of an arbitrary electromagnetic environment to a correlation-driven phase boundary.

Here we derive such a criterion for the Brinkman--Rice~\cite{BrinkmanRice1970} transition and evaluate it for a surface-phonon-polariton geometry. The Gutzwiller wave function and its multiorbital and embedding extensions provide a compact description of local correlations~\cite{Gutzwiller1963,Gutzwiller1964,Gutzwiller1965,Lechermann2007,Lanata2015, LanataYao2014, Chern2017}. Its expectation values become exact within the variational manifold at infinite coordination~\cite{MetznerVollhardt1989,Gebhard1990,Bunemann1998}, yielding the analytic Brinkman--Rice quasiparticle-collapse transition~\cite{BrinkmanRice1970}; intersite interactions can also be incorporated systematically~\cite{Goldstein2020}. Combining this limit with a variational photon displacement, closely related to Lang--Firsov and electron--boson Gutzwiller constructions~\cite{LangFirsov1963,Barone2007,Barone2008,Cui2024, li2023, mazin2024}, yields both a nonperturbative solution for an extensive mode ensemble and a dilute-spectral-weight functional applicable to lossy macroscopic quantum electrodynamics (QED) environments.

We consider the half-filled Hubbard model coupled to normalized electromagnetic modes within the Pauli--Fierz framework,
\begin{align}
 \hat H= &-\frac{t_*}{\sqrt z}\sum_{\langle ij\rangle\sigma}
(c_{i\sigma}^\dagger c_{j\sigma}+\mathrm{H.c.})
+U\sum_i n_{i\uparrow}n_{i\downarrow}\nonumber\\
&+\frac12\sum_\alpha\left[p_\alpha^2+\omega_\alpha^2
\left(q_\alpha-\frac{\lambda_\alpha}{\omega_\alpha}X_\alpha\right)^2\right],
\label{eq:model}\\[-2pt]
X_\alpha= &\sum_i u_{\alpha i}\delta n_i .\nonumber
\end{align}
Here $z$ is the coordination number, and the scaling $t_*/\sqrt z$ keeps the kinetic energy per site finite as $z\to\infty$. The normalized mode profile satisfies $\sum_i|u_{\alpha i}|^2=1$, $\lambda_\alpha$ defines the light--matter coupling, and $\delta n_i$ is the local charge fluctuation. Writing the photonic sector as a completed square retains the gauge-required self-polarization contribution~\cite{Ruggenthaler2014,Flick2017,Schafer2018,Schafer2020}. We characterize each mode by its thermodynamic weight per site and its bond-gradient content,
\begin{equation}
\eta_\alpha=\frac1N\sum_i|u_{\alpha i}|^2,
\quad
r_\alpha=\frac{1}{2\eta_\alpha N_b}
\sum_{\langle ij\rangle}|u_{\alpha i}-u_{\alpha j}|^2,
\label{eq:etar}
\end{equation}
where $N_b=zN/2$ is the number of nearest-neighbor bonds. For the normalization used above, every discrete mode has $\eta_\alpha=1/N$;
we retain $\eta_\alpha$ explicitly because it makes the thermodynamic mode weight and its continuum generalization transparent. The dimensionless factor $r_\alpha$ measures the spatial variation of the mode across nearest-neighbor bonds.
A uniform density mode has $r_\alpha=0$ and is removable at fixed total charge, whereas localized or finite-wave-vector modes generally have $r_\alpha=O(1)$. As shown below, the shift of the Mott boundary depends on this bond-gradient factor. Thus, to contribute to the leading thermodynamic shift, an electromagnetic environment must supply finite local spectral weight and a field profile that varies across the electronic bond.

To derive the analytical shift of the Mott boundary, we use a cavity extension of the Gutzwiller approximation, summarized here and developed in the Supplemental Material (SM). The translationally invariant single-band treatment used in this work is the minimal reduction of a more general cavity extension of the Gutzwiller formalism for intersite correlations. A variational photon displacement maps the Pauli--Fierz Hamiltonian onto a photon-dressed electronic Hamiltonian plus a separable low-rank residual interaction, $\frac12\sum_{\alpha RR'}\hat B_{\alpha R}\hat B_{\alpha R'}$. Its local and connected intersite contributions are evaluated through the extended-Gutzwiller $\mathcal R$ and $\mathcal T$ operator mappings.
In the infinite-coordination limit, this construction evaluates expectation values exactly within the stated variational manifold.

\emph{Mode-extensivity and spectral criterion.--}
If the susceptibility of the normalized cavity-coupled coordinate is non-superextensive, the mode frequencies and couplings remain finite, and no coupled collective coordinate acquires a macroscopic expectation value, each normalized mode changes the total ground-state energy by at most $O(1)$. Consequently, $M$ modes shift the energy density, and hence any regular phase boundary, by $O(M/N)$ (see Theorem~1 in the SM). A fixed or subextensive set of bright modes therefore cannot move a normal-state thermodynamic transition, whereas an extensive set of modes or a continuum with finite local spectral weight can. Exceptions require superradiant order, a divergent susceptibility strong enough to compensate the mode normalization, or superextensive coupling; related gauge constraints have been analyzed in electronic cavity models~\cite{Andolina2019,Guerci2020}.

The recent QMC study of the honeycomb Hubbard model provides a complementary realization of the single-mode limit of this criterion~\cite{Inacio2026}.
There, one properly normalized long-wavelength cavity mode leaves the Gross--Neveu Mott critical point unchanged, consistent with its $O(N^{-1})$ contribution to the energy density.
Moreover, the uniform spin-singlet current channel addressed by the cavity does not overlap the spin-triplet fluctuations controlling the antiferromagnetic transition, so the divergent critical response does not compensate the mode normalization.
Nevertheless, the photon spectrum changes by an order-one amount through the extensive optical conductivity, directly illustrating the distinction between spectroscopic hybridization and thermodynamic control. The framework below extends this single-mode result to multimode and spatially structured environments that can retain finite local spectral weight.

The criterion becomes quantitative through the joint Pauli--Fierz spectral density
\begin{equation}
\Jpf(\omega,r)=\sum_\alpha \eta_\alpha\lambda_\alpha^2
\delta(\omega-\omega_\alpha)\delta(r-r_\alpha).
\label{eq:jpf}
\end{equation}
At dilute cavity-induced spectral weight, independent optimization of the variational displacement for each mode gives
\begin{equation}
\Delta\Uc=-\int_0^\infty d\omega\int dr
\Jpf(\omega,r)\frac{4A_0r}{\omega+4A_0r}
+O(\Jpf^2),
\label{eq:spectral}
\end{equation}
where $A_0=|\ekin|$ and $\ekin$ is the uncorrelated kinetic energy per site. Equation~\ref{eq:spectral} applies two independent filters. The frequency kernel suppresses modes that are too fast to leave a substantial residual self-polarization penalty, while $r$ removes fields that are uniform across the electronic bond. A conventional normalized single mode contributes $\eta_\alpha=1/N$ and therefore vanishes from the energy density in the thermodynamic limit; an extensive continuum can instead retain a finite $\Jpf$.
Equation~\ref{eq:spectral} is the leading term in the cavity-induced spectral weight, whereas the degenerate ensemble considered next can be solved nonperturbatively within the same Gutzwiller manifold.

\begin{figure}[t]
\includegraphics[width=\columnwidth]{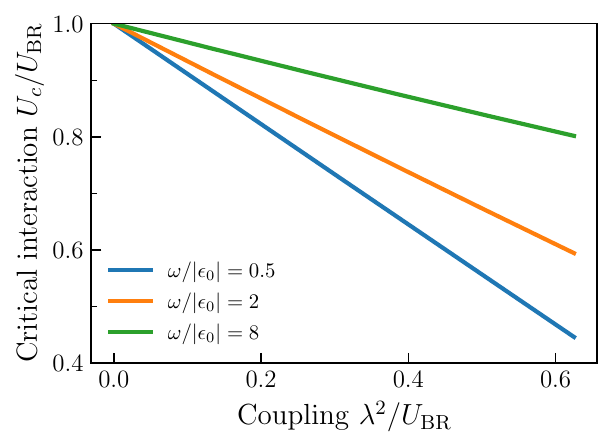}
\vspace{-20pt}
\caption{Cavity-induced displacement of the Brinkman--Rice boundary. Solutions of Eq.~\ref{eq:critical} for $\nmode=r=1$ show that the same gauge-complete coupling lowers $U_c$ through a residual self-polarization penalty for slow modes and bandwidth narrowing for fast modes.
Here $U_{\rm BR}=8A_0$ is the Brinkman--Rice critical interaction in the absence of light--matter coupling.
}
\label{fig:phase}
\end{figure}

For comparison, a degenerate ensemble with mode density $\nmode=M/N$, common frequency $\omega$, common coupling $\lambda$, and common bond factor $r$ is solvable beyond dilute coupling. With a variational displacement $\xi$ and Gaussian squeezing $\sigma$, the infinite-$z$ variational energy density is
\begin{align}
E(D,\xi,\sigma)=&-AZ(D)+[U+\nmode\lambda^2(1-\xi)^2]D\nonumber\\
&+\nmode\frac{\omega}{4}(\sigma+\sigma^{-1}-2),
\label{eq:energy}
\end{align}
with $Z(D)=8D(1-2D)$ and $A=A_0e^{-\nmode r\lambda^2\sigma\xi^2/(2\omega)}\equiv A_0e^{-S}$.
Eq.~\ref{eq:energy} contains three physically distinct contributions: the photon-dressed kinetic energy $-AZ(D)$, the residual self-polarization cost $\nmode\lambda^2(1-\xi)^2D$, and the Gaussian squeezing energy. The first two terms cooperate in favoring localization, but their relative importance changes with frequency.
At the Mott boundary, $D=Z=0$ and $\sigma_c=1$, leading to $A_c=A_0e^{-S_c}$, $\xi_c=\omega/(\omega+4rA_c)$, $S_c=\nmode r\lambda^2\xi_c^2/(2\omega)$, and
\begin{align}
U_c= &8A_c-\nmode\lambda^2(1-\xi_c)^2.
\label{eq:critical}
\end{align}
Expanding Eq.~\ref{eq:critical} to first order in $\nmode$ reproduces Eq.~\ref{eq:spectral} (see the SM for details). The full solution continuously connects the residual self-polarization mechanism for slow modes to photon-induced bandwidth narrowing for fast modes, as shown in Fig.~\ref{fig:phase}.

\begin{figure*}[t]
\includegraphics[width=0.995\textwidth]{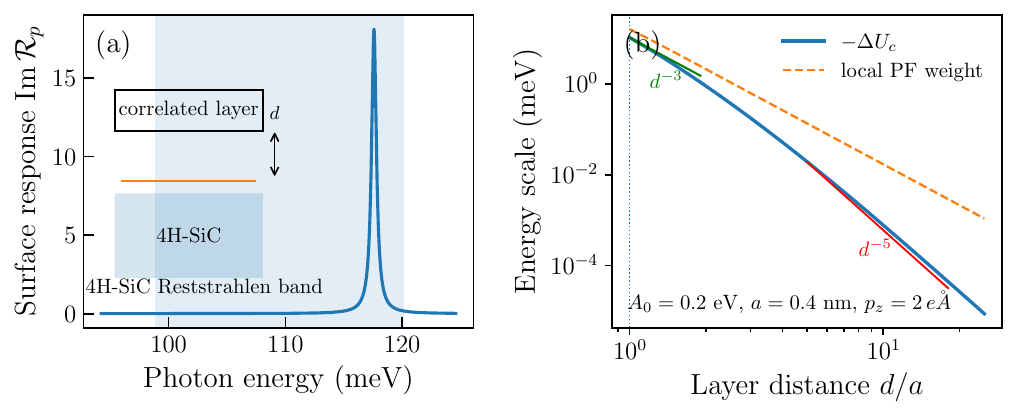}
\vspace{-15pt}
\caption{Bond-scale field variation controls the surface-polariton shift. (a) Geometry and surface response of 4H-SiC. (b) Cavity-induced shift of the Mott boundary and total local Pauli--Fierz weight versus layer--surface distance. Although the local near-field weight grows as $d$ decreases, only its bond-projected component shifts $U_c$, producing a crossover from $d^{-3}$ to $d^{-5}$. Parameters are given in the panel.}
\label{fig:sphp}
\end{figure*}

The interplay with the intrinsic Hubbard interaction is transparent upon defining $u=(U+\delta U_{\rm sp})/(8A)$, with $\delta U_{\rm sp}=\nmode\lambda^2(1-\xi)^2$.
Without the electromagnetic environment, $u_0=U/(8A_0)$, $D=(1-u_0)/4$, $Z=1-u_0^2$, and $U_c^{(0)}=8A_0$. The environment increases the effective correlation ratio in two cooperative ways: residual self-polarization raises the cost of charge fluctuations, while photon dressing lowers the kinetic scale $A$. At fixed microscopic $U$, both effects suppress $D$ and $Z$ and drive the system toward localization. Thus $\Delta U_c<0$ does not mean that the bare Coulomb integral is reduced; it means that a smaller intrinsic $U$ is sufficient to localize the photon-dressed quasiparticles.

\emph{Surface-phonon-polariton environment.--}
Consider a correlated layer a distance $d$ above a planar phonon-polariton surface, such as 4H-SiC, with local polarization $p_z\delta n_i$ normal to the interface. Macroscopic QED expresses the scattering-field spectral density through the dyadic Green tensor~\cite{Dung1998,ScheelBuhmann2008,Philbin2010,Svendsen2021}. In the nonretarded regime, the local Pauli--Fierz weight resolved by in-plane momentum $q$ and photon energy $\Omega$ is
\begin{equation}
\Jpf^{\rm surf}(\Omega,q)=
\frac{p_z^2}{2\pi^2\epsilon_0\Omega}
q^2e^{-2qd}\operatorname{Im}\mathcal R_p(q,\Omega),
\label{eq:surfaceJ}
\end{equation}
where only the cavity-induced scattering Green tensor is retained. For an isotropic bond of length $a$, translational invariance and angular averaging give the bond form factor $r_{\rm b}(q)=1-J_0(qa)$. Substituting Eq.~\ref{eq:surfaceJ} into Eq.~\ref{eq:spectral} gives
\begin{equation}
\Delta U_c^{\rm surf}=-\int_0^\infty d\Omega
\int_0^{q_{\rm max}}dq\,
\Jpf^{\rm surf}(\Omega,q)
\frac{4A_0r_{\rm b}(q)}{\Omega+4A_0r_{\rm b}(q)},
\label{eq:surface_shift}
\end{equation}
which directly incorporates the lossy continuum without discretizing it into an arbitrary set of modes. The cutoff $q_{\rm max}$ is supplied physically by the lattice, Wannier form factors, nonlocal dielectric response, or microscopic surface structure.

Surface phonon polaritons in polar dielectrics provide low-loss, deeply confined infrared fields~\cite{Greffet2002,Hillenbrand2002,Taubner2006,Caldwell2015}. We use the measured single-oscillator dielectric response of 4H-SiC, with $\epsilon_\infty=6.7$, $\omega_{\rm TO}=797~\mathrm{cm}^{-1}$, $\omega_{\rm LO}=969~\mathrm{cm}^{-1}$, and damping $4~\mathrm{cm}^{-1}$~\cite{Tiwald1999}; in the electrostatic limit, $\mathcal R_p=(\epsilon-1)/(\epsilon+1)$. Figure~\ref{fig:sphp} shows that the local spectral weight is concentrated near the surface-phonon-polariton resonance, but the Mott-boundary shift is controlled by its bond-projected component. For $d\gg a$, the dominant momenta satisfy $q\sim d^{-1}$ and $r_{\rm b}(q)\simeq(qa)^2/4$. When $4A_0r_{\rm b}(q)\ll\Omega$ over the relevant spectral window, Eq.~\ref{eq:surface_shift} gives
\begin{equation}
|\Delta U_c|\simeq
\frac{3p_z^2A_0a^2}{8\pi^2\epsilon_0d^5}
\int_0^\infty d\Omega
\frac{\operatorname{Im}\mathcal R_p(\Omega)}{\Omega^2}.
\label{eq:d5}
\end{equation}
For $d\sim a$, before lattice and nonlocal-response cutoffs dominate, $r_{\rm b}(q)$ becomes order unity and the scaling crosses toward the local $d^{-3}$ near-field law. A large local density of optical states is therefore insufficient by itself: the field must vary across a correlated bond. For $A_0=0.2$ eV, $a=0.4$ nm, and a charge-transfer dipole $p_z=2e\mathring{\mathrm A}$, the continuum model gives $|\Delta U_c|=0.37$ meV at $d=1$ nm and $5.0$ meV at $d=0.5$ nm using a lattice cutoff $q\leq\pi/a$. The subnanometer value should be regarded as an upper-scale estimate because nonlocal response, Wannier form factors, and microscopic surface structure then become important.

\emph{Finite-coordination variational Monte Carlo (VMC) validation.--}
We directly test the coefficient that controls the phase-boundary shift. Following variational evaluations of Gutzwiller states~\cite{HongHirsch1990,Foulkes2001}, we sample the determinant state $g^{\hat D}|\Phi_0\rangle$ on finite-$z$ random regular graphs exactly within the chosen variational state up to Monte Carlo error, while integrating the photon coordinates analytically (see the SM for details). For site-local orthonormal modes at weak coupling,
\begin{align}
\frac{\delta E(D)}{\nmode\lambda^2}= &a(D)\xi^2+b(D)(1-\xi)^2,\nonumber\\
\min_\xi\frac{\delta E(D)}{\nmode\lambda^2}= &\frac{a(D)b(D)}{a(D)+b(D)},
\label{eq:vmcfunctional}
\end{align}
where $a(D)$ is obtained from bond-resolved hopping estimators and $b(D)$ from local charge fluctuations. The critical-shift coefficient is the low-$D$ derivative
\begin{equation}
C_{\rm VMC}=\left.\partial_D\min_\xi
\frac{\delta E(D)}{\nmode\lambda^2}\right|_{D\to0},
\quad
\Delta U_c=-\nmode\lambda^2 C_{\rm VMC}.
\label{eq:cvmc}
\end{equation}
This construction does not impose the infinite-$z$ form $Z(D)=8D(1-2D)$ and therefore tests the infinite-coordination contraction. It remains a variational calculation within the Gutzwiller wave-function family rather than an unbiased solution of the finite-dimensional Hubbard model.

\begin{figure*}[t]
\includegraphics[width=0.99\textwidth]{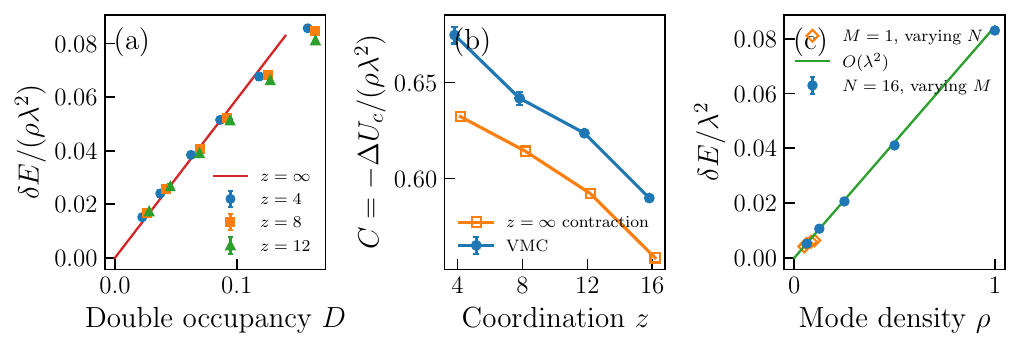}
\vspace{-15pt}
\caption{Finite-coordination validation of the spectral-density criterion. (a) Weak-coupling cavity energy density versus measured double occupancy; its $D\to0$ slope determines the shift of the Mott boundary. (b) The VMC coefficient on $N=24$ random regular graphs agrees with the infinite-coordination prediction within $5$--$7\%$; error bars include graph-to-graph fluctuations. (c) Separate variation of $M$ and $N$ confirms $\delta e\propto M/N$ and the vanishing $1/N$ contribution of a single normalized mode.}
\label{fig:vmcnew}
\end{figure*}

Figure~\ref{fig:vmcnew}(a) shows the sampled cavity energy near the projected limit for $N=24$ and multiple graph realizations. Panel (b) compares the extracted derivative with the analytical coefficient $4A_0/(\omega+4A_0)$ for $z=4,8,12,16$; the difference is only $5$--$7\%$ despite the finite coordination. Panel (c) separately varies $M$ at fixed $N$ and $N$ at fixed $M=1$; both data sets follow the same linear $M/N$ law. An eight-site exact-enumeration self-test, graph-resolved errors, raw samples, and numerical details are provided in the SM.

\emph{Physical interpretation and implications.--}
Equation~\ref{eq:spectral} is governed by the bounded response kernel
\begin{equation}
\mathcal K(\omega,r)=\frac{4A_0r}{\omega+4A_0r}.
\label{eq:kernel}
\end{equation}
For $\omega\ll4A_0r$, a slow electromagnetic coordinate follows the charge fluctuation and leaves the maximal residual self-polarization contribution, $\mathcal K\to1$. For $\omega\gg4A_0r$, the mode instead acts primarily through bandwidth narrowing and $\mathcal K\simeq4A_0r/\omega$. The spatial factor is equally important: a locally intense field that is nearly constant across a bond has $r\to0$ and does not shift the transition. The relevant quantity is therefore neither a resonance frequency nor the local density of optical states alone, but the fluctuation spectrum projected onto the field difference associated with an electronic hop.

Although the realistic environment is supported by a polar lattice, this mechanism is distinct from the conventional attractive interaction of a Hubbard--Holstein model. In the gauge-complete Pauli--Fierz square, self-polarization cancels the zero-frequency boson-exchange interaction, so no static attraction remains at zero frequency. The equilibrium effect instead arises from a dynamical and spatially structured dressing of charge fluctuations and hopping. This distinction also separates collective strong coupling from ground-state control: a long-wavelength bright mode may display an order-one polariton splitting while contributing only $O(1/N)$ to the energy density, whereas a finite-$q$ continuum can produce a finite thermodynamic effect without a single dominant splitting.

The joint spectral density provides a direct cavity-design rule. In a planar structure, $r_{\rm b}(q)=1-J_0(qa)$ converts Eq.~\ref{eq:spectral} into a momentum--frequency integral. In a general nanophotonic structure, the same quantity is obtained from the scattering Green tensor as the local field correlator minus the nearest-neighbor cross correlator, with the free-space contribution subtracted. In the antiadiabatic limit ($\omega\gg4A_0r$),
\begin{equation}
\Delta U_c\simeq-4A_0\int_0^\infty\frac{d\omega}{\omega}
\int dr r\Jpf(\omega,r),
\label{eq:sumrule}
\end{equation}
so the first inverse-frequency moment of the bond-projected spectrum is the relevant figure of merit. Classical electrodynamic solvers can thus screen candidate structures without introducing an arbitrary discrete-mode cutoff.

The SiC geometry gives a particularly direct experimental signature. The evanescent factor $e^{-2qd}$ selects $q\sim1/d$: for $d\sim a$, the field resolves neighboring sites and the shift crosses toward the local $d^{-3}$ behavior, whereas for $d\gg a$ the bond form factor adds $q^2a^2$ and changes the asymptotic scaling to $d^{-5}$. Measuring this crossover by varying spacer thickness would test both the boundary shift and its bond-scale origin. Near a pressure-, gate-, or bandwidth-tuned endpoint, even a sub-meV change in $U_c$ can translate into a measurable displacement of the control parameter. Large charge-transfer dipoles, narrow electronic bands, and broad spectral weight near $q\sim a^{-1}$ and $\omega\lesssim4A_0$ are favorable; thin polar gaps, paired interfaces, and patterned or hyperbolic structures may therefore outperform a single far-field resonance.

The VMC calculation verifies the critical coefficient without imposing the infinite-$z$ form of $Z(D)$ and shows that the $M/N$ extensivity law is robust at finite coordination within the Gutzwiller variational family. It is not, however, an unbiased finite-dimensional Hubbard solution: it omits Hubbard bands and long-range magnetic order. More accurate Dynamical Mean Field Theory (DMFT)~\cite{dmft:2006rmp} or QMC solvers will modify the numerical response kernel while retaining the same electrodynamic input~\cite{Georges1996,Kotliar2006}. Gauge consistency and ultraviolet regularization remain essential: self-polarization must be retained, only the cavity-induced scattering Green tensor should be used to define a change relative to free space, and Wannier form factors, nonlocal dielectric response, and microscopic surface structure provide the physical high-momentum cutoff~\cite{Ruggenthaler2014,Schafer2018,Schafer2020,Svendsen2021}.

In summary, we identify the microscopic conditions under which vacuum electromagnetic fluctuations can alter a bulk correlation-driven transition. The central result is that collective strong coupling alone is not sufficient: except in singular cases involving a divergent susceptibility, macroscopic photon occupation, or superextensive coupling, a normal-state Mott boundary responds only to finite thermodynamic Pauli--Fierz spectral weight that varies across the electronic process controlling localization.
By combining mode extensivity, bond-scale spatial projection, and frequency response in a single joint spectral density, the present framework resolves the apparent tension between large spectroscopic splittings and negligible thermodynamic shifts, while explaining why spatially structured multimode continua can remain effective in the thermodynamic limit. The surface-polariton example converts this principle into an experimentally testable $d^{-3}$-to-$d^{-5}$ crossover, and the finite-coordination VMC calculation shows that the critical coefficient remains quantitatively close to the infinite-coordination result.

More broadly, the joint spectral density establishes a modular bridge between correlated-electron theory and realistic nanophotonic electrodynamics. It allows Green-tensor calculations, classical electromagnetic design, and advanced many-body solvers to be combined without an arbitrary discrete-mode cutoff, providing a practical route to screen and ultimately inverse-design photonic environments for equilibrium phase control. Replacing the Gutzwiller response kernel by DMFT, QMC, or multiorbital response functions should extend the same strategy to charge-transfer, orbital-selective, magnetic, excitonic, and superconducting instabilities. Experimentally, the most promising platforms are narrow-band materials close to a pressure-, gate-, or strain-tuned endpoint, coupled to structures with substantial low-frequency spectral weight at momenta capable of resolving the relevant bond or orbital texture. This work therefore shifts the design objective from maximizing a Rabi splitting or minimizing a mode volume to engineering momentum-, frequency-, and polarization-resolved field correlations matched to the many-body process one aims to control.

\begin{acknowledgments}
We acknowledge support from the US DOE, Office of Science, Basic Energy Sciences, Chemical Sciences, Geosciences, and Biosciences Division under Triad National Security, LLC (``Triad'') contract Grant 89233218CNA000001 (FWP: LANLECF7). 
This research used computational resources provided by the Institutional Computing (IC) Program and the Darwin testbed at Los Alamos National Laboratory (LANL), funded by the Computational Systems and Software Environments subprogram of LANL's Advanced Simulation and Computing program. 
LANL is operated by Triad National Security, LLC, for the National Nuclear Security Administration of the US Department of Energy (Contract No. 89233218CNA000001).
\end{acknowledgments}

\bibliography{references}

\clearpage
\appendix

\setcounter{equation}{0}
\renewcommand{\theequation}{A\arabic{equation}}

\begin{widetext}
\begin{center}
\large\bf Supplemental Material for ``\mytitle"
\end{center}

\input{sm_kernel}

\end{widetext}

\end{document}

%% file: sm_kernel.tex
This Supplemental Material (SM) develops the general cavity-extended Gutzwiller construction, derives the solvable Brinkman--Rice limit and its dilute spectral-density functional, maps the result to a planar surface-phonon-polariton continuum, and documents the finite-coordination determinant-VMC tests.

\section{Cavity-extended Gutzwiller approximation for intersite correlations}
\label{smsec:general_cega}

We first formulate a cavity extension of the Gutzwiller approximation for general intersite correlations and then derive the single-band Brinkman--Rice reduction used in the main text.
An electronic-configuration-dependent photon displacement converts each Pauli--Fierz mode into a photon-dressed electronic Hamiltonian plus a separable low-rank intersite interaction. Such terms lie outside the purely local Gutzwiller treatment but are handled by the extended operator construction for intersite interactions~\cite{Goldstein2020}. The resulting variational structure is closely related to Gutzwiller electron--boson wave functions and variational Lang--Firsov transformations~\cite{LangFirsov1963,Barone2007,Barone2008,Cui2024, li2023, mazin2024}.

\subsection{General electronic and Pauli--Fierz Hamiltonian}

Partition the electronic system into localized correlated units $R$, which may denote sites, correlated orbitals, molecules, or clusters.  We write a Hamiltonian including local correlation and two-site nonlocal correlation~\cite{Goldstein2020},
\begin{equation}
\hat H_{\rm el} =\sum_R\hat H_R^{\rm loc} +\sum_{R\ne R'}\sum_{\mu\nu}
J_{RR'}^{\mu\nu}\hat O_{R\mu}\hat O_{R'\nu},
\label{smeq:Hel_general}
\end{equation}
where $\{\hat O_{R\mu}\}$ is a local operator basis containing densities, orbital transitions, multipoles, spins, and, when needed, pair-changing operators. In the length gauge, the multimode Pauli--Fierz Hamiltonian is
\begin{equation}
\hat H_{\rm PF}=\hat H_{\rm el} +\frac{1}{2}\sum_\alpha
\left[\hat p_\alpha^2+\omega_\alpha^2
\left(\hat q_\alpha-\frac{\hat X_\alpha}{\omega_\alpha}\right)^2\right],
\label{smeq:PF_general}
\end{equation}
with
\begin{equation}
[\hat q_\alpha,\hat p_\beta]=i\delta_{\alpha\beta},
\quad
\hat X_\alpha=\sum_R\hat X_{\alpha R}.
\label{smeq:X_general}
\end{equation}
The completed square retains the dipole self-polarization term required by a gauge-consistent truncated matter description~\cite{Ruggenthaler2014,Flick2017,Schafer2018,Schafer2020}.  Its electronic part is explicitly nonlocal,
\begin{equation}
  \frac{1}{2}\hat X_\alpha^2 =\frac{1}{2}\sum_{RR'}\hat X_{\alpha R}\hat X_{\alpha R'},
\label{smeq:DSE_general}
\end{equation}
but its coupling matrix is separable, with rank no larger than the number of retained electromagnetic modes.

\subsection{Variational displacement and the photon-dressed electronic Hamiltonian}

We use the polaron-dressed Gutzwiller state
\begin{equation}
\ket{\Psi_{\rm cEGA}} =
\exp\!\left[-i\sum_\alpha\hat p_\alpha\hat F_\alpha\right]
\left(\prod_R\hat P_R\right)\ket{\Psi_0} \ket{\chi_{\rm ph}},
\label{smeq:cEGA_state}
\end{equation}
where $\ket{\Psi_0}$ is a Slater determinant or Bogoliubov vacuum, $\hat P_R$ is a local Gutzwiller correlator, and
\begin{equation}
\hat F_\alpha=\sum_R\hat F_{\alpha R},
\quad
\hat F_{\alpha R}=\sum_\mu f_{\alpha R}^{\mu}\hat O_{R\mu}
\label{smeq:F_general}
\end{equation}
is a Hermitian local electronic operator.  The photon state is a centered Gaussian with momentum covariance
\begin{equation}
C^{pp}_{\alpha\beta} =\frac{1}{2} \langle\hat p_\alpha\hat p_\beta+
\hat p_\beta\hat p_\alpha\rangle_{\chi_{\rm ph}}.
\label{smeq:Cpp}
\end{equation}
The variational displacement correlates each electronic configuration with a distinct photonic displacement and therefore captures electron--photon entanglement that is absent from a direct product $\ket{\Psi_G}\ket{\chi_{\rm ph}}$.

The transformation gives
\begin{equation}
\hat q_\alpha\rightarrow\hat q_\alpha+\hat F_\alpha,
\quad
\hat H_{\rm el}\rightarrow \exp\!\left[i\sum_\alpha\hat p_\alpha \mathrm{ad}_{F_\alpha}\right] \hat H_{\rm el},
\label{smeq:transform_general}
\end{equation}
where $\mathrm{ad}_{F}(A)=[F,A]$.  After averaging over the Gaussian photon state,
\begin{equation}
\hat H_{\rm el}^{\rm var} =\overline H_{\rm el}(\bm F,C^{pp})
+\frac{1}{2}\sum_\alpha (\omega_\alpha\hat F_\alpha-\hat X_\alpha)^2,
\label{smeq:Hel_var_general}
\end{equation}
plus the Gaussian photon energy.  If the $\hat F_\alpha$ commute,
\begin{equation}
\overline H_{\rm el} =\exp\!\left[-\frac{1}{2}\sum_{\alpha\beta} C^{pp}_{\alpha\beta}
\mathrm{ad}_{F_\alpha}\mathrm{ad}_{F_\beta}\right]\hat H_{\rm el}.
\label{smeq:Hbar_general}
\end{equation}
Define residual local cavity operators
\begin{equation}
\hat B_{\alpha R}=\omega_\alpha\hat F_{\alpha R}-\hat X_{\alpha R}.
\label{smeq:B_general}
\end{equation}
Then
\begin{equation}
\frac{1}{2}(\omega_\alpha\hat F_\alpha-\hat X_\alpha)^2 =\frac{1}{2}\sum_{RR'}\hat B_{\alpha R}\hat B_{\alpha R'},
\label{smeq:lowrank_general}
\end{equation}
which is a separable low-rank intersite interaction. Because conjugation by a sum of local $\hat F_{\alpha R}$ does not enlarge the support of an initially one- or two-unit electronic term, $\overline H_{\rm el}$ remains compatible with the extended-Gutzwiller representation.

\subsection{Extended-Gutzwiller operator equivalences}

The electronic state in Eq.~\ref{smeq:cEGA_state} is
\begin{equation}
\ket{\Psi_G}=\prod_R\hat P_R\ket{\Psi_0}.
\label{smeq:Gstate_general}
\end{equation}
Introduce quasiparticle fermions $f_{Ra}$ and the one-body density matrix
\begin{equation}
\Delta_{RR',ab}=\bra{\Psi_0} f_{Ra}^\dagger f_{R'b}\ket{\Psi_0}.
\label{smeq:Delta_general}
\end{equation}
The local constraints are
\begin{align}
\bra{\Psi_0}\hat P_R^\dagger\hat P_R\ket{\Psi_0}&=1,\\
\bra{\Psi_0}\hat P_R^\dagger\hat P_R f_{Ra}^\dagger f_{Rb}\ket{\Psi_0} &=\Delta_{R,ab}.
\label{smeq:Gconstraints_general}
\end{align}
Equivalently, one may use a normalized local embedding state $\ket{\Phi_R}$ satisfying
\begin{equation}
\bra{\Phi_R} \hat f_{Rb}\hat f_{Ra}^\dagger\ket{\Phi_R} =\Delta_{R,ab}.
\label{smeq:embedding_constraint_general}
\end{equation}

For a local fermionic creation operator,
\begin{equation}
\hat P_R^\dagger c_{Rm}^\dagger\hat P_R
\longrightarrow\sum_a\mathcal R_{R,ma}f_{Ra}^\dagger,
\label{smeq:Rmap_general}
\end{equation}
where
\begin{equation}
\mathcal R_{R,ma}=\sum_b \bra{\Phi_R}c_{Rm}^\dagger f_{Rb}\ket{\Phi_R}
\left[((\mathbf 1-\Delta_R)\Delta_R)^{-1/2}\right]_{ba}.
\label{smeq:Rmatrix_general}
\end{equation}
For a local Hermitian number-conserving operator $\hat O_{R\mu}$, the leading extended-Gutzwiller equivalence is most transparently written in centered form,
\begin{equation}
\hat P_R^\dagger\hat O_{R\mu}\hat P_R
\longrightarrow
o_{R\mu} +\sum_{ab}\mathcal T_{R\mu;ab} (f_{Ra}^\dagger f_{Rb}-\Delta_{R,ab}),
\label{smeq:Tmap_general}
\end{equation}
with $o_{R\mu}=\bra{\Phi_R}\hat O_{R\mu}\ket{\Phi_R}$.  One convenient explicit representation is
\begin{align}
\mathcal T_{R\mu;dc}=&
\sum_{ab} \left[((\mathbf1-\Delta_R)\Delta_R)^{-1/2}\right]_{da}
\bra{\Phi_R}\hat O_{R\mu}f_{Rb}f_{Ra}^\dagger\ket{\Phi_R}
\left[((\mathbf1-\Delta_R)\Delta_R)^{-1/2}\right]_{bc}
\nonumber\\
&-o_{R\mu}\left[(\mathbf1-\Delta_R)^{-1}\right]_{dc}.
\label{smeq:Tmatrix_general}
\end{align}
For $R\ne R'$, Wick contraction in $\ket{\Psi_0}$ gives
\begin{equation}
\langle\hat O_{R\mu}\hat O_{R'\nu}\rangle_G
\simeq o_{R\mu}o_{R'\nu} -\Tr[\mathcal T_{R\mu}\Delta_{RR'}
\mathcal T_{R'\nu}\Delta_{R'R}].
\label{smeq:two_unit_general}
\end{equation}
Local products with $R=R'$ are evaluated directly in $\ket{\Phi_R}$.  Pair-changing and correlated-hopping channels are represented by the corresponding $\mathcal S$ and $\mathcal U$ tensors of the general intersite construction~\cite{Goldstein2020}; they are not needed for the density-coupled single-band reduction below.

\subsection{Cavity-extended energy functional and saddle-point equations}

Application of Eq.~\ref{smeq:two_unit_general} to the residual operators in Eq.~\ref{smeq:B_general} gives
\begin{align}
E_{\rm res}^{\rm cav}=&\frac{1}{2}\sum_\alpha\Bigg[
\sum_R\bra{\Phi_R}\hat B_{\alpha R}^2\ket{\Phi_R}
+\sum_{R\ne R'}\Big(b_{\alpha R}b_{\alpha R'}
-\Tr[\mathcal T^B_{\alpha R}\Delta_{RR'}
\mathcal T^B_{\alpha R'}\Delta_{R'R}]
\Big)\Bigg],
\label{smeq:Ecav_general}
\end{align}
where $b_{\alpha R}=\bra{\Phi_R}\hat B_{\alpha R}\ket{\Phi_R}$.  The first intersite contribution is a correlated Hartree term, whereas the second describes connected quasiparticle coherence.  The full constrained variational principle is therefore
\begin{equation}
E_0^{\rm cEGA}=\min_{\Psi_0,\{\Phi_R\},\{f_{\alpha R}^{\mu}\},C^{pp}}
\left\{E_{\rm extGA}\!\left[
\overline H_{\rm el}+\frac{1}{2}\sum_{\alpha RR'}
\hat B_{\alpha R}\hat B_{\alpha R'}\right]+E_{\rm ph}[C^{pp}]\right\},
\label{smeq:variational_general}
\end{equation}
subject to Eqs.~\ref{smeq:Gconstraints_general} and \ref{smeq:embedding_constraint_general}.  Variation with respect to $\ket{\Psi_0}$ produces a quasiparticle Hamiltonian
\begin{equation}
[ h_{\rm QP}]_{RR',ab} =\frac{\partial E_{\rm cEGA}}{\partial\Delta_{R'R,ba}},
\label{smeq:QP_general}
\end{equation}
including a low-rank nonlocal cavity contribution.  Variation with respect to $\ket{\Phi_R}$ gives a local embedding problem containing the usual hybridization term together with a cavity-induced impurity--bath vertex of schematic form
\begin{equation}
\sum_{\alpha,ab}\mathcal F_{\alpha R;ab}
\hat B_{\alpha R}(f_{Rb}f_{Ra}^\dagger-\Delta_{R,ab}).
\label{smeq:embedding_vertex_general}
\end{equation}
Finally, $\partial E/\partial f_{\alpha R}^{\mu}=0$ and variation with respect to $C^{pp}$ determine the optimal variational displacement and Gaussian photon covariance. Equations~\ref{smeq:Ecav_general}--\ref{smeq:variational_general} define the cavity extension of the intersite Gutzwiller formalism.

\subsection{Reduction to the density-coupled single-band model}
\label{smsec:reduction_general}

For the Hubbard model, choose
\begin{equation}
\hat X_\alpha=\lambda_\alpha\sum_i u_{\alpha i}\delta n_i,
\quad
\hat F_\alpha=\frac{\lambda_\alpha\xi_\alpha}{\omega_\alpha}
\sum_i u_{\alpha i}\delta n_i.
\label{smeq:density_reduction}
\end{equation}
The transformed hopping is
\begin{equation}
c_{i\sigma}^\dagger c_{j\sigma}\rightarrow
c_{i\sigma}^\dagger c_{j\sigma} \exp\!\left[i\sum_\alpha
\frac{\lambda_\alpha\xi_\alpha}{\omega_\alpha}
\hat p_\alpha(u_{\alpha i}-u_{\alpha j})\right],
\label{smeq:hopping_reduction}
\end{equation}
whereas the residual interaction is
\begin{equation}
\hat H_{\rm res}=\frac{1}{2}\sum_{ij}K_{ij}\delta n_i\delta n_j,
\quad
K_{ij}=\sum_\alpha\lambda_\alpha^2(1-\xi_\alpha)^2
u_{\alpha i}u_{\alpha j}.
\label{smeq:K_general}
\end{equation}
In the translationally averaged, half-filled, paramagnetic infinite-coordination limit, the connected intersite term in Eq.~\ref{smeq:Ecav_general} is subleading for the isotropic mode ensemble used in the main text.  The general cEGA functional then reduces to three scalar variables: the double occupancy $D$, displacement fraction $\xi$, and squeezing $\sigma$.  The following sections derive this reduction and then restore a general electromagnetic spectrum to leading order in its local spectral weight.

\section{Reduction to the solvable Brinkman--Rice problem}
\label{smsec:scope_reduced}

The main text specializes the general functional of Sec.~\ref{smsec:general_cega} to the zero-temperature, half-filled, paramagnetic Gutzwiller state at infinite coordination. In this limit the Gutzwiller approximation evaluates expectation values exactly within the chosen variational manifold~\cite{MetznerVollhardt1989,Bunemann1998}. The resulting Brinkman--Rice transition is therefore exact \emph{within that manifold}, but it is not the exact Hubbard-model transition: dynamical mean-field theory additionally retains incoherent Hubbard bands, dynamical self-energy effects, and superexchange processes absent from the elementary Brinkman--Rice state.

The mode-extensivity theorem below assumes a normal, nonsuperradiant phase in which the connected correlation and response kernels of the normalized cavity-coupled coordinates remain non-superextensive. It does not exclude a finite effect from one mode when that mode overlaps a macroscopic order parameter, when a divergent susceptibility compensates mode normalization, or when the light--matter coupling is scaled superextensively.

\section{Model and mode geometry}

We consider
\begin{equation}
\hat H=\hat H_{\rm H}+\sum_{\alpha=1}^{M}\hat H_{\alpha},
\end{equation}
with
\begin{equation}
\hat H_{\rm H}=-\frac{t_*}{\sqrt z}\sum_{\langle ij\rangle\sigma}
(c_{i\sigma}^\dagger c_{j\sigma}+\mathrm{H.c.})
+U\sum_i n_{i\uparrow}n_{i\downarrow},
\label{smeq:hubbard}
\end{equation}
and
\begin{equation}
\hat H_{\alpha}=\frac{1}{2}\left[\hat p_\alpha^2+\omega_\alpha^2
\left(\hat q_\alpha-\frac{\lambda_\alpha}{\omega_\alpha}\hat X_\alpha\right)^2\right],
\quad
\hat X_\alpha=\sum_i u_{\alpha i}\delta\hat n_i,
\label{smeq:pf}
\end{equation}
where $\delta\hat n_i=\hat n_i-1$ and $\sum_i|u_{\alpha i}|^2=1$. We set $\hbar=1$ except in the macroscopic-QED mapping of Sec.~\ref{smsec:sphp}. Mode profiles are taken real for notational simplicity; the complex case follows by inserting the corresponding conjugates. For each mode define
\begin{equation}
\eta_\alpha=\frac{1}{N}\sum_i|u_{\alpha i}|^2,
\quad
r_\alpha=\frac{1}{2\eta_\alpha N_b}
\sum_{\langle ij\rangle}|u_{\alpha i}-u_{\alpha j}|^2,
\label{smeq:eta_r_modes}
\end{equation}
with $N_b=zN/2$. For a conventionally normalized discrete mode, $\eta_\alpha=1/N$ identically. The thermodynamic mode density and its bond-weighted geometry factor are
\begin{equation}
\nmode=\sum_\alpha\eta_\alpha=\frac{M}{N},
\qquad
r=\frac{1}{\nmode}\sum_\alpha\eta_\alpha r_\alpha
=\frac{1}{2\nmode N_b}
\sum_{\langle ij\rangle,\alpha}|u_{\alpha i}-u_{\alpha j}|^2.
\label{smeq:rho_r_modes}
\end{equation}
Thus $\nmode$ is the thermodynamic weight per site of the normalized mode ensemble, whereas $r$ measures field variation across a nearest-neighbor bond. A uniform density mode has $r_\alpha=0$; finite-wave-vector and localized modes generally retain $r_\alpha=O(1)$.

\section{Normal-state mode-extensivity theorem}

\begin{theorem}[Normal-state mode-extensivity criterion]
Let $\hat H_N^{(0)}$ be an extensive short-range matter Hamiltonian on $N$ correlated units, and let $M$ normalized harmonic modes couple through $\hat X_\alpha=\sum_i u_{\alpha i}\hat O_i$, with bounded local $\hat O_i$ and $\sum_i|u_{\alpha i}|^2=1$. Assume that (i) no $\hat X_\alpha$ acquires a macroscopic expectation value, (ii) the connected equal-time correlation matrix and the corresponding static response kernel have $O(1)$ operator norms in the mode channels, (iii) the mode frequencies remain finite and bounded away from zero, and (iv) the coupling amplitudes do not grow with $N$. Then
\begin{equation}
E_N-E_N^{(0)}=O(M),
\quad
\frac{E_N-E_N^{(0)}}{N}=O\left(\frac{M}{N}\right).
\label{smeq:theorem}
\end{equation}
If a regular phase boundary is determined by equality of two extensive energy densities with a nonzero derivative with respect to its tuning parameter, its displacement is also $O(M/N)$.
\end{theorem}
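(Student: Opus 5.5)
The plan is to bound the energy change from above and below separately, each by a sum of per-mode contributions that are $O(1)$ under assumptions (i)--(iv).

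\emph{Upper bound.} Take the trial state $\ket{\Psi_N^{(0)}}\otimes\ket{\chi_{\rm ph}}$, where $\ket{\Psi_N^{(0)}}$ is the matter ground state and $\ket{\chi_{\rm ph}}$ is the product of oscillator vacua shifted by $\langle q_\alpha\rangle=(\lambda_\alpha/\omega_\alpha)\langle \hat X_\alpha\rangle$. In this state the photonic energy of mode $\alpha$ equals the zero-point term $\omega_\alpha/2$, which can be subtracted as a reference constant, plus $\tfrac12\lambda_\alpha^2\,\mathrm{Var}(\hat X_\alpha)$. The variance is $\mathrm{Var}(\hat X_\alpha)=\sum_{ij}u_{\alpha i}u_{\alpha j}\langle \hat O_i\hat O_j\rangle_c$, a quadratic form of the connected correlation matrix evaluated on a unit vector. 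By assumption (ii) it is bounded by the operator norm $\|C\|=O(1)$. With (iv) this gives $E_N-E_N^{(0)}\le\sum_\alpha[\tfrac12\lambda_\alpha^2\|C\|]=O(M)$. Assumption (i) is used here: if some $\langle \hat X_\alpha\rangle$ were macroscopic, the displaced reference could not be chosen uniformly, and a mean-field contribution of order $N$ could appear.

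\emph{Lower bound.} Completing the square shows that the coupling term is nonnegative for every mode, which suggests $E_N\ge E_N^{(0)}$ up to zero-point constants. That is not quite correct, because the photonic kinetic energy $p_\alpha^2$ can reorganize. The plan is instead to eliminate the photons in second order, or equivalently to use the exact Gaussian integration of each harmonic coordinate in imaginary time. This yields a retarded matter action with the effective interaction $-\tfrac12\lambda_\alpha^2\,\hat X_\alpha(\tau)\,D_\alpha(\tau-\tau')\,\hat X_\alpha(\tau')$ added to the local self-polarization term. The resulting ground-energy shift is bounded below by $-\tfrac12\lambda_\alpha^2\,\chi_\alpha$, where $\chi_\alpha=u_\alpha^{T}\chi u_\alpha\le\|\chi\|$ is the static response kernel in the mode channel; this is a Bogoliubov/Peierls-type convexity bound. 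Each mode is controlled by (ii) and (iii): bounded-away-from-zero frequencies keep the propagator weights finite. Summing over modes gives $-O(M)$.

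To avoid cross-mode terms I would treat the modes collectively, writing $K=\sum_\alpha\lambda_\alpha^2u_\alpha u_\alpha^T$, and bound $\Tr$-type contributions by $\|\chi\|\sum_\alpha\lambda_\alpha^2=O(M)$. This is the step I expect to be hardest. Rigorously, the response kernel that enters is the one of the coupled system rather than of the bare matter system, so a self-consistency (Dyson) argument is needed. It has the form $\chi_{\rm full}=\chi(1-K\chi)^{-1}$, and it stays bounded exactly when no mode softens, i.e.\ in the nonsuperradiant normal state assumed in (i)--(iii). I would make this precise by a continuity argument in a coupling parameter $s\in[0,1]$. Using Hellmann--Feynman, $dE/ds=\sum_\alpha\langle\partial_sH_\alpha\rangle_s$, and each term is $O(1)$ by the assumed bounds holding along the path. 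Integrating over $s$ then gives $|E_N-E_N^{(0)}|=O(M)$ directly, and this route may replace the two-sided argument altogether.

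\emph{Phase boundary.} Let the two competing phases have energy densities $e_{1,2}(g)$ for the tuning parameter $g$, with $e_1(g_c)=e_2(g_c)$ and $\partial_g(e_1-e_2)\ne0$ at $g_c$. Each density is perturbed by $O(M/N)$, uniformly near $g_c$, by the result above applied within each phase. The implicit function theorem then shifts the root by $\delta g_c=-\delta(e_1-e_2)/\partial_g(e_1-e_2)=O(M/N)$.
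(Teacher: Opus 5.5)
Your upper bound and the implicit-function step are sound, and they match the paper's per-mode counting. The displaced-vacuum product state gives $E_N-E_N^{(0)}-\sum_\alpha\omega_\alpha/2\le\tfrac12\sum_\alpha\lambda_\alpha^2\,u_\alpha^{T}Cu_\alpha\le\tfrac12\|C\|\sum_\alpha\lambda_\alpha^2$, where $C$ is the connected correlation matrix of the uncoupled ground state.

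The gap is in the lower bound, and it comes from discarding an argument that is in fact exact. $\hat X_\alpha$ acts only on matter, so it commutes with $q_\alpha$ and $p_\alpha$. Hence $Q_\alpha=q_\alpha-\lambda_\alpha\hat X_\alpha/\omega_\alpha$ and $p_\alpha$ obey $[Q_\alpha,p_\alpha]=i$. With $\tilde a_\alpha=\sqrt{\omega_\alpha/2}\,Q_\alpha+ip_\alpha/\sqrt{2\omega_\alpha}$, one has the operator identity $\tfrac12(p_\alpha^2+\omega_\alpha^2Q_\alpha^2)=\omega_\alpha(\tilde a_\alpha^\dagger\tilde a_\alpha+\tfrac12)\ge\omega_\alpha/2$ on the full Hilbert space. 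Equivalently, $e^{-i\lambda_\alpha p_\alpha\hat X_\alpha/\omega_\alpha}$ maps this single term unitarily onto a free oscillator. Adding these inequalities to $\hat H_N^{(0)}\ge E_N^{(0)}$ gives $E_N\ge E_N^{(0)}+\sum_\alpha\omega_\alpha/2$. No ``reorganization'' of $p_\alpha^2$ can spoil this, whether or not the $\hat X_\alpha$ commute with $\hat H_N^{(0)}$ or with each other.

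The substitute you propose does not close. The Gibbs--Bogoliubov (Peierls--Bogoliubov) inequality bounds the energy from \emph{above} through reference-state averages. Its reversed form bounds it from below only through averages in the \emph{coupled} ground state. That is exactly why you are forced into the Dyson resummation and continuity argument, which you leave unfinished. The bound $-\tfrac12\lambda_\alpha^2\chi_\alpha$ is therefore asserted rather than derived. The Hellmann--Feynman alternative is likewise only heuristic, since it presumes (i)--(ii) in the coupled state at every $s$.

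Once the elementary bound is restored, your argument becomes the two-sided estimate
\[
0\le E_N-E_N^{(0)}-\sum_\alpha\frac{\omega_\alpha}{2}\le\frac12\|C\|\sum_\alpha\lambda_\alpha^2=O(M),
\]
which is sharper and more rigorous than the paper's proof. The paper only sketches per-mode counting without an explicit lower bound: it bounds the completed square by the connected variance and the polaron dressing of short-range electronic terms by the response kernel. Your repaired version needs neither the dressing estimate, nor assumption (i), nor the response-kernel half of (ii). Assumption (iii) enters only to make the zero-point sum $O(M)$ if it is not subtracted. Relatedly, (i) is not used in your upper bound either: the displacement $\langle q_\alpha\rangle=\lambda_\alpha\langle\hat X_\alpha\rangle/\omega_\alpha$ cancels the mean of $\hat X_\alpha$ exactly, whatever its size. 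The phase-boundary step is then the same implicit-function argument as in the paper.
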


\begin{proof}
For a normalized mode,
\begin{equation}
\langle\hat X_\alpha^2\rangle_c
=\sum_{ij}u_{\alpha i}u_{\alpha j}
\langle\hat O_i\hat O_j\rangle_c.
\end{equation}
The assumed $O(1)$ norm of the connected correlation kernel implies $\langle\hat X_\alpha^2\rangle_c=O(1)$. The optimized completed Pauli--Fierz square therefore contributes at most $O(1)$ per mode. The photon-induced dressing of any short-range electronic term is governed by normalized field differences and the same bounded response kernel, and is likewise $O(1)$ per mode. Summing over $M$ modes gives Eq.~\ref{smeq:theorem}. If $f(g)$ is the difference between two uncoupled energy densities, with $f(g_c)=0$ and $f'(g_c)\neq0$, an $O(M/N)$ correction shifts the root by the same order through the implicit-function theorem.
\end{proof}

\begin{remark}
At a critical point, a response-kernel eigenvalue may diverge. One normalized mode can then produce a finite effect if its profile overlaps the critical eigenvector strongly enough to compensate normalization. The theorem is therefore a normal-state extensivity statement, not a universal no-go theorem for critical or superradiant settings. The Brinkman--Rice transition studied here is a local quasiparticle-collapse transition within the paramagnetic Gutzwiller manifold and obeys the stated counting.
\end{remark}


\section{Variational-displacement Gutzwiller state}

We use
\begin{equation}
\ket{\Psi(D,\xi,\sigma)} = \hat U_\xi\prod_i\hat P_i\ket{\Phi_0}\ket{\chi_\sigma},
\end{equation}
where
\begin{equation}
\hat U_\xi= \exp\left[-i\frac{\lambda\xi}{\omega}
\sum_{\alpha i}\hat p_\alpha u_{\alpha i}\delta\hat n_i\right].
\label{smeq:U}
\end{equation}
The photon Gaussian obeys
\begin{equation}
\avg{q_\alpha}=\avg{p_\alpha}=0,
\quad
\avg{p_\alpha^2}=\frac{\omega\sigma}{2},
\quad
\avg{q_\alpha^2}=\frac{1}{2\omega\sigma}.
\label{smeq:gaussian}
\end{equation}
The bare vacuum has $\sigma=1$.  The Gutzwiller local probabilities at half filling are
\begin{equation}
p_0=p_2=D,
\quad
p_\uparrow=p_\downarrow=\frac{1}{2}-D.
\label{smeq:probabilities}
\end{equation}
The quasiparticle weight is
\begin{equation}
Z(D)=8D(1-2D).
\label{smeq:Z}
\end{equation}

\subsection{Photon transformation of hopping}

Using $[\delta n_i,c_{i\sigma}^\dagger]=c_{i\sigma}^\dagger$, one obtains
\begin{equation}
\hat U_\xi^\dagger c_{i\sigma}^\dagger c_{j\sigma}\hat U_\xi
=c_{i\sigma}^\dagger c_{j\sigma}
\exp\left[i\frac{\lambda\xi}{\omega}
\sum_\alpha \hat p_\alpha(u_{\alpha i}-u_{\alpha j})\right].
\end{equation}
The Gaussian average is
\begin{align}
\left\langle e^{i(\lambda\xi/\omega)\sum_\alpha \hat p_\alpha
(u_{\alpha i}-u_{\alpha j})}\right\rangle_{\chi_\sigma}
&=\exp\left[-\frac{\lambda^2\xi^2\sigma}{4\omega}
\sum_\alpha|u_{\alpha i}-u_{\alpha j}|^2\right].
\label{smeq:debye}
\end{align}
Averaging over bonds and using Eq.~\ref{smeq:rho_r_modes} gives the dressed kinetic scale
\begin{equation}
A=A_0\exp\left[-\nmode r\frac{\lambda^2\xi^2\sigma}{2\omega}\right],
\quad A_0=|\ekin|,
\label{smeq:A}
\end{equation}
where $\ekin<0$ is the noninteracting kinetic energy per site.

\subsection{Residual self-polarization term}

The displacement shifts
\begin{equation}
\hat U_\xi^\dagger \hat q_\alpha\hat U_\xi
=\hat q_\alpha+\frac{\lambda\xi}{\omega}\sum_i u_{\alpha i}\delta n_i.
\end{equation}
Consequently,
\begin{equation}
\hat U_\xi^\dagger\hat H_\alpha\hat U_\xi
=\frac{1}{2}\left[\hat p_\alpha^2+\omega^2
\left(\hat q_\alpha-\frac{\lambda(1-\xi)}{\omega}X_\alpha\right)^2\right].
\end{equation}
For the isotropic ensemble used in the main text, the diagonal part of the mode kernel is $\nmode$ and off-diagonal density contractions are subleading in the infinite-coordination normal phase.  Therefore
\begin{equation}
\frac{1}{N}\sum_\alpha\frac{\lambda^2(1-\xi)^2}{2} \avg{X_\alpha^2}
=\nmode\lambda^2(1-\xi)^2D.
\label{smeq:residual}
\end{equation}
We used
\begin{equation}
\avg{(\delta n_i)^2}_G=2D.
\end{equation}
For a general finite-range kernel, the extended-Gutzwiller equivalences of Ref.~\cite{Goldstein2020} add explicit connected intersite terms. These terms preserve the subextensive-versus-extensive counting but modify the geometry-dependent coefficient and the quantitative critical line.

\subsection{Squeezing energy}

For one pure Gaussian mode,
\begin{equation}
\frac{1}{2}\avg{p^2}+\frac{\omega^2}{2}\avg{q^2}-\frac{\omega}{2}
=\frac{\omega}{4}(\sigma+\sigma^{-1}-2).
\end{equation}
There are $M=\nmode N$ effective modes, so the squeezing energy density is
\begin{equation}
E_{\rm sq}=\nmode\frac{\omega}{4}(\sigma+\sigma^{-1}-2).
\label{smeq:sq}
\end{equation}

\subsection{Infinite-coordination variational energy functional}

Combining Eqs.~\ref{smeq:Z}, \ref{smeq:A}, \ref{smeq:residual}, and \ref{smeq:sq}, the variational energy per site is
\begin{equation}
E(D,\xi,\sigma)=-A Z(D)+[U+\nmode\lambda^2(1-\xi)^2]D
+\nmode\frac{\omega}{4}(\sigma+\sigma^{-1}-2).
\label{smeq:energy}
\end{equation}
Define
\begin{equation}
U_{\rm eff}=U+\nmode\lambda^2(1-\xi)^2,
\quad
u=\frac{U_{\rm eff}}{8A}.
\label{smeq:u}
\end{equation}
Stationarity with respect to $D$ gives
\begin{equation}
-8A(1-4D)+U_{\rm eff}=0,
\end{equation}
so that
\begin{equation}
D=\frac{1-u}{4},
\quad
Z=1-u^2,
\label{smeq:DZ}
\end{equation}
for $u<1$, while $D=Z=0$ for $u\geq1$.  Substitution yields
\begin{equation}
E_{\rm el}^{\star}=-A(1-u)^2.
\label{smeq:emin}
\end{equation}

\subsection{Stationarity conditions for displacement and squeezing}

Using the envelope theorem, derivatives with respect to $\xi$ and $\sigma$ may be taken at fixed optimized $D$.  Since
\begin{equation}
\frac{\partial A}{\partial\xi}=-2\nmode r\frac{\lambda^2}{2\omega}\sigma\xi A,
\end{equation}
we find
\begin{equation}
\nmode r\frac{\lambda^2}{\omega}\sigma\xi A Z
=2\nmode\lambda^2(1-\xi)D.
\end{equation}
Using $Z/D=4(1+u)$ gives
\begin{equation}
\xi=\frac{\omega}{\omega+2r\sigma A(1+u)}.
\label{smeq:xi}
\end{equation}
Similarly,
\begin{equation}
\frac{\partial A}{\partial\sigma}=-\nmode r\frac{\lambda^2}{2\omega}\xi^2A,
\end{equation}
and $\partial e/\partial\sigma=0$ gives
\begin{equation}
\sigma=\left[1+\frac{2r\lambda^2\xi^2AZ}{\omega^2}\right]^{-1/2}.
\label{smeq:sigma}
\end{equation}
The factors of $\nmode$ cancel from the explicit stationarity equations, but remain in $A$ and $u$.

\section{Critical line}

At the Brinkman--Rice boundary, $u_c=1$, $D_c=Z_c=0$, and the squeezing returns to the vacuum value $\sigma_c=1$.  Equations~\ref{smeq:A} and \ref{smeq:xi} become
\begin{align}
A_c&=A_0\exp\left[-\nmode r\frac{\lambda^2}{2\omega}\xi_c^2\right],
\label{smeq:Ac}\\
\xi_c&=\frac{\omega}{\omega+4rA_c}.
\label{smeq:xic}
\end{align}
The condition $u_c=1$ gives
\begin{equation}
\Uc=8A_c-\nmode\lambda^2(1-\xi_c)^2.
\label{smeq:Uc}
\end{equation}
These three scalar equations give the Brinkman--Rice boundary exactly within the infinite-coordination variational manifold for the solvable mode ensemble.

\subsection{Adiabatic limit}

For $\omega\ll4rA_0$,
\begin{equation}
\xi_c\simeq\frac{\omega}{4rA_0},
\quad
A_c=A_0+O(\omega),
\end{equation}
so that
\begin{equation}
\Uc=8A_0-\nmode\lambda^2+O(\omega).
\label{smeq:adiabatic}
\end{equation}

\subsection{Antiadiabatic limit}

For $\omega\gg4rA_0$,
\begin{equation}
\xi_c=1-\frac{4rA_0}{\omega}+O(\omega^{-2}),
\end{equation}
and the residual term is higher order.  Therefore
\begin{equation}
\Uc=8A_0\exp\left[-\nmode r\frac{\lambda^2}{2\omega}\right]
+O(\omega^{-2}).
\label{smeq:anti}
\end{equation}

\subsection{Dilute-mode expansion of the critical interaction}
\label{smsec:dilute_expansion}

Here we show explicitly that the dilute-mode expansion of the critical equations above produces the bounded kernel used in the main text and in the general spectral functional derived below. We define the cavity-induced shift of the
Brinkman--Rice critical interaction as
\begin{equation}
\Delta U_c \equiv U_c-U_{\rm BR},
\quad
U_{\rm BR}=8A_0.
\label{smeq:delta_uc_def}
\end{equation}
For a degenerate ensemble with mode density $\nmode=M/N$, the critical
equations are
\begin{align}
A_c &= A_0 e^{-S_c},
\label{smeq:ac_sm}\\
S_c &= \frac{\nmode r\lambda^2}{2\omega}\xi_c^2,
\label{smeq:sc_sm}\\
\xi_c &= \frac{\omega}{\omega+4rA_c},
\label{smeq:xic_sm}\\
U_c &= 8A_c-\nmode\lambda^2(1-\xi_c)^2.
\label{smeq:uc_sm}
\end{align}

At $\nmode=0$, the cavity does not modify the electronic system, and
therefore
\begin{equation}
A_c^{(0)}=A_0,
\quad
U_c^{(0)}=8A_0.
\label{smeq:zeroth_sm}
\end{equation}
The zeroth-order variational displacement is consequently
\begin{equation}
\xi_0
=
\frac{\omega}{\omega+4rA_0}.
\label{smeq:xi0_sm}
\end{equation}
Since $A_c-A_0=O(\nmode)$, one has
\begin{equation}
\xi_c=\xi_0+O(\nmode).
\label{smeq:xi_expand_sm}
\end{equation}
Both $S_c$ and the residual self-polarization term in
Eq.~\ref{smeq:uc_sm} already contain an explicit factor of $\nmode$.
Thus, to first order in $\nmode$,
\begin{align}
\nmode\xi_c^2 &= \nmode\xi_0^2+O(\nmode^2),
\label{smeq:rho_xi_sm}\\
\nmode(1-\xi_c)^2 &= \nmode(1-\xi_0)^2+O(\nmode^2).
\label{smeq:rho_residual_sm}
\end{align}
The explicit $O(\nmode)$ correction to $\xi_c$ is therefore not needed
for determining $\Delta U_c$ to linear order.

Because $S_c=O(\nmode)$, the renormalized kinetic scale can be expanded as
\begin{equation}
A_c = A_0e^{-S_c} =A_0(1-S_c)+O(\nmode^2).
\label{smeq:ac_expand_sm}
\end{equation}
Using Eq.~\ref{smeq:xi0_sm}, this gives
\begin{equation}
A_c-A_0 = - A_0 \frac{\nmode r\lambda^2}{2\omega}
\xi_0^2 +O(\nmode^2).
\label{smeq:ac_shift_sm}
\end{equation}
The corresponding contribution to the shift of the critical
interaction is
\begin{align}
\Delta U_c^{\rm BW} &\equiv 8(A_c-A_0)
\nonumber\\
&= - \frac{4\nmode A_0r\lambda^2}{\omega} \xi_0^2 +O(\nmode^2)
\nonumber\\
&= -\nmode\lambda^2 \frac{4A_0r \omega} {(\omega+4A_0r)^2} +O(\nmode^2).
\label{smeq:delta_uc_bw_sm}
\end{align}
This term originates from the photon-induced narrowing of the
electronic bandwidth.

The residual self-polarization contribution is
\begin{align}
\Delta U_c^{\rm SP} &\equiv -\nmode\lambda^2(1-\xi_c)^2
\nonumber\\
&= -\nmode\lambda^2(1-\xi_0)^2 +O(\nmode^2).
\label{smeq:delta_uc_sp_1_sm}
\end{align}
Using
\begin{equation}
1-\xi_0 = \frac{4A_0r}{\omega+4A_0r},
\label{smeq:one_minus_xi_sm}
\end{equation}
one obtains
\begin{equation}
\Delta U_c^{\rm SP} = -\nmode\lambda^2
\frac{(4A_0r)^2}  {(\omega+4A_0r)^2} +O(\nmode^2).
\label{smeq:delta_uc_sp_sm}
\end{equation}

Adding Eqs.~\ref{smeq:delta_uc_bw_sm} and
\ref{smeq:delta_uc_sp_sm}, the total shift becomes
\begin{align}
\Delta U_c &=-\nmode\lambda^2
\frac{4A_0r \omega+(4A_0r)^2}  {(\omega+4A_0r)^2} +O(\nmode^2)
\nonumber\\
&= -\nmode\lambda^2
\frac{4A_0r(\omega+4A_0r)} {(\omega+4A_0r)^2} +O(\nmode^2)
\nonumber\\
&= -\nmode\lambda^2 \frac{4A_0r} {\omega+4A_0r} +O(\nmode^2).
\label{smeq:delta_uc_final_sm}
\end{align}
Thus, although the bandwidth-narrowing and residual self-polarization
contributions separately contain squared denominators, their sum
reduces to the single kernel appearing in Eq.~\ref{smeq:spectralfunctional}.

To make the connection to the joint Pauli--Fierz spectral density
explicit, consider $M$ degenerate extended modes with common
$\omega$, $r$, and $\lambda$. For a conventionally normalized mode,
$\eta_\alpha=1/N$, so that
\begin{equation}
\sum_{\alpha=1}^{M}\eta_\alpha = \frac{M}{N} =\nmode.
\label{smeq:eta_rho_sm}
\end{equation}
The corresponding spectral density is
\begin{equation}
\mathcal{J}_{\rm PF}(\omega',r') =
\nmode\lambda^2 \delta(\omega'-\omega) \delta(r'-r).
\label{smeq:jpf_degenerate_sm}
\end{equation}
Substituting Eq.~\ref{smeq:jpf_degenerate_sm} into the spectral functional yields
\begin{align}
\Delta U_c &= -\int_0^\infty d\omega' \int dr'
\mathcal{J}_{\rm PF}(\omega',r')
\frac{4A_0r'}{\omega'+4A_0r'} +O(\mathcal{J}_{\rm PF}^2)
\nonumber\\
&= -\nmode\lambda^2 \frac{4A_0r}{\omega+4A_0r} +O(\nmode^2),
\label{smeq:spectral_degenerate_sm}
\end{align}
in agreement with Eq.~\ref{smeq:delta_uc_final_sm}.

For a dilute collection of nondegenerate modes, the first-order
contributions are additive:
\begin{equation}
\Delta U_c = -\sum_\alpha
\eta_\alpha\lambda_\alpha^2
\frac{4A_0r_\alpha}  {\omega_\alpha+4A_0r_\alpha}
+O(\mathcal{J}_{\rm PF}^2).
\label{smeq:multimode_sum_sm}
\end{equation}
Using the definition
\begin{equation}
\mathcal{J}_{\rm PF}(\omega,r)
= \sum_\alpha \eta_\alpha\lambda_\alpha^2
\delta(\omega-\omega_\alpha) \delta(r-r_\alpha),
\label{smeq:jpf_repeat_sm}
\end{equation}
Eq.~\ref{smeq:multimode_sum_sm} is immediately converted into Eq.~\ref{smeq:spectralfunctional}. Terms of
$O(\mathcal{J}_{\rm PF}^2)$ arise from the common self-consistent
renormalization of $A_c$: the change of $A_c$ induced by one mode
modifies the optimized variational displacement of the other modes,
generating products of their spectral weights.

The limiting behavior also separates the two physical mechanisms. For
slow modes,
\begin{equation}
\omega\ll 4A_0r:
\quad
\Delta U_c \simeq -\nmode\lambda^2,
\label{smeq:slow_limit_sm}
\end{equation}
and the residual self-polarization penalty dominates. For fast modes,
\begin{equation}
\omega\gg 4A_0r:
\quad
\Delta U_c \simeq -\nmode\lambda^2\frac{4A_0r}{\omega},
\label{smeq:fast_limit_sm}
\end{equation}
where the residual self-polarization contribution is only
$O(\omega^{-2})$, while the leading $O(\omega^{-1})$ correction comes
from photon-induced bandwidth narrowing.

\section{Photon number and critical cusp}

For a mode with annihilation operator
\begin{equation}
a_\alpha=\sqrt{\frac{\omega}{2}}q_\alpha+
\frac{i}{\sqrt{2\omega}}p_\alpha,
\end{equation}
the squeezed-vacuum contribution is
\begin{equation}
\avg{a_\alpha^\dagger a_\alpha}_{\rm sq}
=\frac14(\sigma+\sigma^{-1}-2).
\end{equation}
The variational displacement contributes
\begin{equation}
\frac{\omega}{2}\avg{F_\alpha^2},
\quad
F_\alpha=\frac{\lambda\xi}{\omega}\sum_i u_{\alpha i}\delta n_i.
\end{equation}
For the isotropic ensemble,
\begin{equation}
\frac{N_{\rm ph}}{N}=\nmode\frac{\sigma+\sigma^{-1}-2}{4}
+\nmode\frac{\lambda^2}{\omega}\xi^2D.
\label{smeq:nph}
\end{equation}
Near the transition, $D\sim(\Uc-U)$ and $\sigma-1=O(D)$.  The squeezing occupation is therefore $O(D^2)$, whereas the displacement occupation is $O(D)$.  Thus
\begin{equation}
\frac{N_{\rm ph}}{N}=C_{\rm ph}(\Uc-U)+O[(\Uc-U)^2]
\end{equation}
inside the metal and vanishes in the Brinkman--Rice insulator.  The photon density is therefore continuous but has a cusp at the transition. Standard input--output relations can convert the associated intracavity correlations into experimentally accessible output spectra~\cite{GrunwaldSpectroscopy2025}.

\section{General nondegenerate spectral-density functional}

We now allow every mode to have its own frequency, coupling, spatial profile, and variational displacement. For the normalization used here $\eta_\alpha=1/N$, but we retain it to make the thermodynamic weighting and continuum limit explicit. Define
\begin{equation}
\eta_\alpha=\frac1N\sum_i|u_{\alpha i}|^2,
\quad
r_\alpha=\frac{1}{2\eta_\alpha N_b}
\sum_{\langle ij\rangle}|u_{\alpha i}-u_{\alpha j}|^2.
\label{smeq:etaralpha}
\end{equation}
To first order in the total local photonic spectral weight, the variational optimization separates mode by mode.  The transition-point bandwidth is
\begin{equation}
A_c=A_0\exp\left[-\frac{1}{2}\sum_\alpha
\eta_\alpha r_\alpha\frac{\lambda_\alpha^2}{\omega_\alpha}
\xi_\alpha^2\right]+O(\lambda^4),
\label{smeq:Anondeg}
\end{equation}
and the critical interaction is
\begin{equation}
U_c=8A_c-\sum_\alpha\eta_\alpha\lambda_\alpha^2
(1-\xi_\alpha)^2+O(\lambda^4).
\label{smeq:Unondeg}
\end{equation}
The stationarity condition for mode $\alpha$, evaluated at the uncoupled transition, is
\begin{equation}
\xi_{\alpha 0}=\frac{\omega_\alpha}
{\omega_\alpha+4A_0r_\alpha}.
\label{smeq:xialpha}
\end{equation}
Expanding Eqs.~\ref{smeq:Anondeg} and \ref{smeq:Unondeg} and using Eq.~\ref{smeq:xialpha} gives
\begin{equation}
\Delta U_c=-\sum_\alpha\eta_\alpha\lambda_\alpha^2
\frac{4A_0r_\alpha}{\omega_\alpha+4A_0r_\alpha}
+O(\lambda^4).
\label{smeq:generalshift}
\end{equation}
The two terms in Eq.~\ref{smeq:Unondeg} combine into a single bounded response factor: the bandwidth term contributes
$-(4A_0r_\alpha/\omega_\alpha)\xi_{\alpha0}^2$, while the residual square contributes $-(1-\xi_{\alpha0})^2$; their sum is $-(1-\xi_{\alpha0})$.

The natural joint spectral density is
\begin{equation}
\mathcal J_{\rm PF}(\omega,r)=
\sum_\alpha\eta_\alpha\lambda_\alpha^2
\delta(\omega-\omega_\alpha)\delta(r-r_\alpha).
\label{smeq:JPF}
\end{equation}
Equation~\ref{smeq:generalshift} then becomes
\begin{equation}
\Delta U_c=-\int_0^\infty d\omega\int dr
\mathcal J_{\rm PF}(\omega,r)
\frac{4A_0r}{\omega+4A_0r}
+O(\mathcal J_{\rm PF}^2).
\label{smeq:spectralfunctional}
\end{equation}
A frequency-only spectral density is insufficient because two modes at the same frequency may have parametrically different bond gradients. Equation~\ref{smeq:spectralfunctional} is therefore a joint frequency--geometry functional. A conventional normalized bright mode has $\eta_\alpha=1/N$ and produces an $O(N^{-1})$ shift, whereas a lossy continuum can retain finite $\mathcal J_{\rm PF}$ without being discretized into normal modes.

If a mode is written in the usual form
$\hbar g_\alpha(a_\alpha+a_\alpha^\dagger)X_\alpha$, completion of the Pauli--Fierz square gives
\begin{equation}
\lambda_\alpha^2=\frac{2\hbar g_\alpha^2}{\omega_\alpha},
\label{smeq:lambdag}
\end{equation}
where $\omega_\alpha$ is an angular frequency in Eq.~\ref{smeq:lambdag}.  This relation connects the variational coupling convention directly to standard macroscopic-QED spectral densities.

\section{Planar 4H-SiC surface-phonon-polariton geometry}
\label{smsec:sphp}

\subsection{Macroscopic-QED mapping}

For a local dipole $\mathbf p$ at positions $\mathbf R_i$, the electromagnetic coupling spectral-density matrix is~\cite{Dung1998,Svendsen2021}
\begin{equation}
J^{(g)}_{ij}(\omega)= \frac{\omega^2}{\pi\hbar\epsilon_0c^2}
\mathbf p\cdot\operatorname{Im} \mathbf G(\mathbf R_i,\mathbf R_j;\omega)
\cdot\mathbf p,
\label{smeq:mqedJ}
\end{equation}
where $\mathbf G$ is the classical dyadic Green tensor.  Combining Eqs.~\ref{smeq:lambdag} and \ref{smeq:mqedJ}, the Pauli--Fierz self-polarization weight per angular-frequency interval is
\begin{equation}
\frac{d\Lambda_{ij}}{d\omega}=
\frac{2\omega}{\pi\epsilon_0c^2}
\mathbf p\cdot\operatorname{Im} \mathbf G(\mathbf R_i,\mathbf R_j;\omega)
\cdot\mathbf p.
\label{smeq:LambdaG}
\end{equation}
Only the scattering part of $\mathbf G$ is retained; the free-space contribution already absorbed into the observable matter parameters is subtracted.

Consider a layer in vacuum at height $d$ above a planar polar dielectric and take $\mathbf p=p_z\hat{\mathbf z}$.  In the nonretarded near field, the scattering Green tensor is
\begin{equation}
\operatorname{Im} G_{zz}^{\rm sc}(\mathbf R_i,\mathbf R_j;\omega)
=\frac{1}{4\pi k_0^2}\int_0^\infty dq q^2e^{-2qd}
J_0(qR_{ij})\operatorname{Im}\Rp(q,\omega),
\label{smeq:Gzz}
\end{equation}
with $k_0=\omega/c$. Here $\Rp(q,\omega)$ is the Fresnel reflection coefficient for
$p$-polarized modes with in-plane wave vector $q$ and frequency $\omega$,
\begin{equation}
\Rp(q,\omega)
= \frac{\epsilon(\omega)\beta_0-\beta_m}
{\epsilon(\omega)\beta_0+\beta_m},
\quad
\beta_0=\sqrt{k_0^2-q^2},
\quad
\beta_m=\sqrt{\epsilon(\omega)k_0^2-q^2}.
\end{equation}
In the nonretarded limit $q\gg k_0$, this reduces to
\begin{equation}
\Rp(\omega) = \frac{\epsilon(\omega)-1}
{\epsilon(\omega)+1}.
\end{equation}

Using photon energy $\Omega=\hbar\omega$, the local Pauli--Fierz spectral weight resolved in $q$ is
\begin{equation}
\mathcal J_{\rm PF}^{\rm surf}(\Omega,q)=
\frac{p_z^2}{2\pi^2\epsilon_0\Omega}
q^2e^{-2qd}\operatorname{Im}\Rp(q,\Omega).
\label{smeq:Jsurface}
\end{equation}
For an isotropic nearest-neighbor bond of length $a$, angular averaging gives
\begin{equation}
\rb(q)=1-J_0(qa).
\label{smeq:rq}
\end{equation}
The geometry-specific Mott-boundary shift is therefore
\begin{equation}
\Delta U_c^{\rm surf}=-\int_0^\infty d\Omega\int_0^{q_{\rm max}}dq
\mathcal J_{\rm PF}^{\rm surf}(\Omega,q)
\frac{4A_0\rb(q)}{\Omega+4A_0\rb(q)}.
\label{smeq:surfaceShift}
\end{equation}
We impose $q_{\rm max}=\pi/a$ so that the continuum Green tensor is not extrapolated beyond the electronic Brillouin-zone scale.

\subsection{Dielectric response of 4H-SiC}
The isotropic single-oscillator approximation is
\begin{equation}
\epsilon(\omega)=\epsilon_\infty
\frac{\omega_{\rm LO}^2-\omega^2-i\gamma\omega}
{\omega_{\rm TO}^2-\omega^2-i\gamma\omega},
\label{smeq:epsSiC}
\end{equation}
with
\begin{equation}
\epsilon_\infty=6.7,\quad
\omega_{\rm TO}=797~\mathrm{cm}^{-1},\quad
\omega_{\rm LO}=969~\mathrm{cm}^{-1},\quad
\gamma=4~\mathrm{cm}^{-1},
\end{equation}
based on infrared ellipsometry of 4H-SiC~\cite{Tiwald1999}. The peak of $\operatorname{Im}\Rp$ lies inside the Reststrahlen band and produces the spectrum shown in Fig.~\ref{fig:sphp}(a) of the main text.

\subsection{Distance asymptotics}

When $d\gg a$, the dominant momenta satisfy $q\sim d^{-1}$ and
\begin{equation}
\rb(q)=1-J_0(qa)=\frac{q^2a^2}{4}+O(q^4a^4).
\end{equation}
If additionally $4A_0\rb(q)\ll\Omega$, the kernel in Eq.~\ref{smeq:surfaceShift} is $4A_0\rb(q)/\Omega$.  Using
\begin{equation}
\int_0^\infty dq q^4e^{-2qd}=\frac{3}{4d^5},
\end{equation}
we obtain
\begin{equation}
|\Delta U_c|\simeq
\frac{3p_z^2A_0a^2}{8\pi^2\epsilon_0d^5}
\int_0^\infty d\Omega
\frac{\operatorname{Im}\Rp(\Omega)}{\Omega^2}.
\label{smeq:d5}
\end{equation}
The additional factor $d^{-2}$ relative to the local near-field density of states is the direct signature of the electronic bond-gradient filter.
For $d\sim a$, before the Brillouin-zone cutoff and microscopic nonlocality dominate, the relevant momenta satisfy $qa=O(1)$, $\rb(q)$ becomes order unity, and the kernel approaches a constant.  Since
\begin{equation}
\int_0^\infty dq q^2e^{-2qd}=\frac{1}{4d^3},
\end{equation}
the shift crosses toward the ordinary local $d^{-3}$ near-field scaling over the intermediate atomic-confinement regime.

\subsection{Numerical values}

\begin{table}[h]
\caption{Surface-polariton contribution to the Mott-boundary shift. Here $\Lambda_{\rm loc}$ is the total local Pauli--Fierz self-polarization weight before projection onto the electronic bond form factor.}
\label{smtab:tab_surface}
\begin{ruledtabular}
\begin{tabular}{cccc}
$d$ (nm) & $|\Delta U_c|$ (meV) & $\Lambda_{\rm loc}$ (meV) & $|\Delta U_c|/\Lambda_{\rm loc}$\\
0.5 & 5.04 & 8.54 & 0.590\\
1.0 & 0.372 & 1.084 & 0.343\\
2.0 & 0.0194 & 0.135 & 0.144\\
5.0 & $2.63\times10^{-4}$ & $8.67\times10^{-3}$ & 0.0303
\end{tabular}
\end{ruledtabular}
\end{table}
For Fig.~\ref{fig:sphp} of the main text we use $A_0=0.2$ eV, $a=0.4$ nm, and $p_z=2e\mathring{\mathrm A}$.
Direct integration of Eq.~\ref{smeq:surfaceShift} gives Table~\ref{smtab:tab_surface}.
Here $\Lambda_{\rm loc}=\int d\Omega dq \mathcal J_{\rm PF}^{\rm surf}$ is the total local self-polarization weight.  The rapidly decreasing ratio in the final column shows that a large local photonic density of states is insufficient at large distance: the field must also carry momentum capable of resolving an electronic bond.  Because the shift scales as $p_z^2$, results for other local charge-transfer dipoles follow by direct rescaling.  At $d\lesssim a$, microscopic nonlocality and surface chemistry are not represented by the continuum model, so the first row should be interpreted as an upper-scale estimate rather than a quantitative materials prediction.

\section{Finite-coordination determinant-VMC validation}
\label{smsec:vmc}

The analytical derivation relies on the exact infinite-coordination evaluation of the chosen Gutzwiller variational state. To test the critical coefficient and mode-extensivity scaling without using that contraction, we evaluate the same variational-displacement family by determinant variational Monte Carlo (VMC) on finite random regular graphs. No Gutzwiller operator equivalence is imposed in this calculation. Fermionic Gutzwiller states can be sampled directly by VMC~\cite{HongHirsch1990,Foulkes2001}, while the photon matrix elements are integrated analytically.

\subsection{Finite-graph trial state}

For a $z$-regular graph with $N$ sites, the one-body hopping is $-t_*/\sqrt z$ on every edge.  The reference determinant $\ket{\Phi_0}$ fills the lowest $N/2$ one-particle orbitals for each spin.  We use
\begin{equation}
\ket{\Psi(g,\xi,\sigma)}=
\hat U_\xi g^{\hat D_{\rm tot}}\ket{\Phi_0}\ket{\chi_\sigma},
\quad
\hat D_{\rm tot}=\sum_i n_{i\uparrow}n_{i\downarrow},
\label{smeq:vmcstate}
\end{equation}
with $0<g\leq1$ and the variational displacement in Eq.~\ref{smeq:U}.  For an electronic configuration
$C=(C_\uparrow,C_\downarrow)$ at fixed $N_\uparrow=N_\downarrow=N/2$, the sampled electronic amplitude is
\begin{equation}
\psi_g(C)=g^{D_{\rm tot}(C)} \det[\Phi_0(C_\uparrow)] \det[\Phi_0(C_\downarrow)].
\label{smeq:amplitude}
\end{equation}
Because $\hat U_\xi$ is unitary and acts as a photon translation conditioned on the electronic occupation, the marginal electronic sampling probability is $|\psi_g(C)|^2$ and is independent of $\xi$ and $\sigma$. We denote the measured double occupancy per site by $D=\avg{\hat D_{\rm tot}}/N$.

\subsection{Photon-integrated local-energy estimator}

Let $C_{i\rightarrow j,\sigma}$ denote a configuration obtained by moving a spin-$\sigma$ electron from occupied site $i$ to an empty site $j$, and define the determinant-Gutzwiller ratio
\begin{equation}
R_{ij\sigma}(C)=
\frac{\psi_g(C_{i\rightarrow j,\sigma})}{\psi_g(C)}.
\label{smeq:ratio}
\end{equation}
The Gaussian photon overlap associated with this hop is
\begin{equation}
B_{ij}(\xi,\sigma)= \exp\left[-\frac{\lambda^2\xi^2\sigma}{4\omega}
\sum_\alpha|u_{\alpha i}-u_{\alpha j}|^2\right].
\label{smeq:Bij}
\end{equation}
After integrating the photon coordinates analytically, the local-energy estimator for this variational state is
\begin{align}
E_{\rm loc}(C)=&-\frac{t_*}{\sqrt z}
\sum_{\langle ij\rangle\sigma}^{\rm allowed}
B_{ij}(\xi,\sigma)R_{ij\sigma}(C)
+U D_{\rm tot}(C)\nonumber\\
&+\frac{\lambda^2(1-\xi)^2}{2}
\sum_\alpha X_\alpha(C)^2
+M\frac{\omega}{4}(\sigma+\sigma^{-1}-2).
\label{smeq:Eloc}
\end{align}
The directed hopping sum contains every allowed matrix element of the Hermitian kinetic operator.  Equation~\ref{smeq:Eloc} is evaluated by ordinary Metropolis sampling of Eq.~\ref{smeq:amplitude}.  Determinant ratios and accepted updates are computed with Sherman--Morrison row replacements.  In addition to single-electron moves, spin-exchange proposals are included to maintain efficient sampling as $g$ becomes small.

\subsection{Localized-mode estimator and weak-coupling slope}

For the mode-extensivity benchmark we choose orthonormal site-local modes on a selected set $\mathcal S$,
\begin{equation}
u_{\alpha i}=\delta_{i,i_\alpha},
\quad i_\alpha\in\mathcal S,
\quad M=|\mathcal S|.
\label{smeq:localmodes}
\end{equation}
On any regular graph this ensemble has $\nmode=M/N$ and $r=1$ exactly.  For a hop across bond $ij$, define
\begin{equation}
s_{ij}=\mathbf 1_{\mathcal S}(i)+\mathbf 1_{\mathcal S}(j)\in\{0,1,2\}.
\end{equation}
The hopping estimator can then be accumulated in three classes $K_s$ according to $s_{ij}=s$, and
\begin{equation}
E(g,\xi,\sigma)=
\sum_{s=0}^{2}e^{-\alpha s}\avg{K_s}
+U\avg{\hat D_{\rm tot}}
+\frac{\lambda^2(1-\xi)^2}{2}\avg Q
+M\frac{\omega}{4}(\sigma+\sigma^{-1}-2),
\label{smeq:vmcenergy}
\end{equation}
where
\begin{equation}
\alpha=\frac{\lambda^2\xi^2\sigma}{4\omega},
\quad
Q=\sum_{i\in\mathcal S}(n_i-1)^2.
\end{equation}
Thus a single electronic Markov chain at fixed $g$ supplies all estimators needed to evaluate the energy for arbitrary $\xi$, $\sigma$, $\lambda$, and $\omega$.

At weak coupling, $\sigma=1+O(\lambda^2)$ and the squeezing energy first contributes at $O(\lambda^4)$.  Expanding Eq.~\ref{smeq:vmcenergy} gives
\begin{equation}
\frac{\Delta E_{\rm cav}}{\nmode\lambda^2}
=a\xi^2+b(1-\xi)^2+O(\lambda^2),
\label{smeq:weakvmc}
\end{equation}
with directly sampled coefficients
\begin{align}
a&=-\frac{1}{4\omega N\nmode}
\sum_{s=0}^{2}s\avg{K_s},
\label{smeq:aVMC}\\
b&=\frac{\avg Q}{2N\nmode}.
\label{smeq:bVMC}
\end{align}
The optimal displacement and minimum correction are therefore
\begin{equation}
\xi_0^{\rm VMC}=\frac{b}{a+b},
\quad
\frac{\Delta E_{\rm cav}}{\nmode\lambda^2}
=\frac{ab}{a+b}+O(\lambda^2).
\label{smeq:vmcslope}
\end{equation}
This finite-graph relation contains no infinite-$z$ approximation. Linear mode-density scaling follows directly from normalized-mode counting and the sampled electronic observables.

\subsection{Direct extraction of the critical-shift coefficient}

Within the Gutzwiller variational description, the cavity contribution to the critical interaction is the derivative of the cavity energy with respect to the per-site double occupancy at the projected endpoint.  Define
\begin{equation}
y(D)=\min_\xi\frac{\Delta E_{\rm cav}(D)}{\nmode\lambda^2}
=\frac{a(D)b(D)}{a(D)+b(D)}.
\label{smeq:yD}
\end{equation}
Then
\begin{equation}
C_{\rm VMC}=\left.\frac{dy}{dD}\right|_{D\rightarrow0},
\quad
\Delta U_c=-\nmode\lambda^2C_{\rm VMC}.
\label{smeq:Cvmc}
\end{equation}
For every graph and mode set, the sampled values are fitted over $D\leq0.135$ to
\begin{equation}
y(D)=C_{\rm VMC}D+c_2D^2,
\label{smeq:fitD}
\end{equation}
with the zero intercept fixed by the fully projected state.  The infinite-coordination prediction for site-local modes ($r=1$) is
\begin{equation}
C_{\rm GA}=\frac{4A_0}{\omega+4A_0},
\label{smeq:CGA}
\end{equation}
where $A_0$ is evaluated separately for each finite graph.  This procedure directly tests the coefficient controlling the Mott-boundary shift rather than the cavity energy at a single representative value of $D$.

\subsection{Simulation parameters and statistical analysis}

The critical-shift data in Fig.~\ref{fig:vmcnew}(a,b) of the main text use $N=24$ random regular graphs with $z=4,8,12,16$, 50 independent graph realizations for each coordination, $t_*=1$, and $\omega=2$.  The sampled Gutzwiller parameters are
\begin{equation}
g\in\{0.08,0.14,0.22,0.32,0.48\},
\end{equation}
which span measured double occupancies from approximately $0.02$ to $0.16$.  Each chain is equilibrated for 180 sweeps, followed by 280 measurements separated by two sweeps.  Blocking estimates quantify the Monte Carlo error for each point; the error bars on the extracted $C_{\rm VMC}$ additionally include graph-to-graph variation.  The resulting coefficients are
\begin{table}[t]
\caption{Finite-coordination VMC estimate of the critical-shift coefficient. Parentheses denote statistical uncertainties in the final digits and include graph-to-graph variation.}
\label{smtab:vmc_coeff}
\begin{ruledtabular}
\begin{tabular}{cccc}
$z$ & $C_{\rm VMC}$ & $C_{\rm GA}$ & $C_{\rm VMC}/C_{\rm GA}$\\
4  & $0.6739(18)$ & $0.6317$ & $1.068(2)$\\
8  & $0.6438(13)$ & $0.6142$ & $1.049(2)$\\
12 & $0.6204(13)$ & $0.5921$ & $1.048(2)$ \\
16 & $0.5907(13)$ & $0.5588$ & $1.057(2)$
\end{tabular}
\end{ruledtabular}
\end{table}
The finite-coordination estimate therefore agrees with the infinite-coordination result within $5$--$7\%$ over the range studied. The remaining difference is consistent with finite-$z$ corrections, including nonlocal loop contractions absent from the Brinkman--Rice evaluation.

Figure~\ref{fig:vmcnew}(c) of the main text shows the independent mode-extensivity calculation.  Those data use $N=16$, $z=8$, $g=0.5$, $\omega=2$, and a finite test coupling $\lambda^2=0.2$ with $M=1,2,4,8,16$.  A second sequence fixes $M=1$ and varies $N=10,12,16,20$.  The photon variables are optimized independently for every $M$, $N$, and statistical block.

%% file: references.bib
@article{dmft:2006rmp,
	author = {Kotliar, G. and Savrasov, S. Y. and Haule, K. and Oudovenko, V. S. and Parcollet, O. and Marianetti, C. A.},
	doi = {10.1103/RevModPhys.78.865},
	issue = {3},
	journal = {Rev. Mod. Phys.},
	month = {Aug},
	numpages = {0},
	pages = {865--951},
	publisher = {American Physical Society},
	title = {Electronic structure calculations with dynamical mean-field theory},
	url = {https://link.aps.org/doi/10.1103/RevModPhys.78.865},
	volume = {78},
	year = {2006}}

@article{LanataYao2014,
	author = {Lanat\`a, Nicola and Strand, Hugo U. R. and Yao, Yongxin and Kotliar, Gabriel},
	doi = {10.1103/PhysRevLett.113.036402},
	issue = {3},
	journal = {Phys. Rev. Lett.},
	month = {Jul},
	numpages = {5},
	pages = {036402},
	publisher = {American Physical Society},
	title = {Principle of Maximum Entanglement Entropy and Local Physics of Strongly Correlated Materials},
	url = {https://link.aps.org/doi/10.1103/PhysRevLett.113.036402},
	volume = {113},
	year = {2014}}

@article{Chern2017,
	author = {Chern, Gia-Wei and Barros, Kipton and Batista, Cristian D. and Kress, Joel D. and Kotliar, Gabriel},
	doi = {10.1103/PhysRevLett.118.226401},
	issue = {22},
	journal = {Phys. Rev. Lett.},
	month = {Jun},
	numpages = {5},
	pages = {226401},
	publisher = {American Physical Society},
	title = {Mott Transition in a Metallic Liquid: Gutzwiller Molecular Dynamics Simulations},
	url = {https://link.aps.org/doi/10.1103/PhysRevLett.118.226401},
	volume = {118},
	year = {2017}}

@misc{bretscher2026,
	archiveprefix = {arXiv},
	author = {Hope M Bretscher and Lorenzo Graziotto and Marios H Michael and Angela Montanaro and I-Te Lu and Andrey Grankin and James W McIver and Jerome Faist and Daniele Fausti and Martin Eckstein and Michael Ruggenthaler and Angel Rubio and DN Basov and Mohammad Hafezi and Martin Claassen and Dante M Kennes and Michael A Sentef},
	eprint = {2604.08666},
	primaryclass = {cond-mat.mes-hall},
	title = {Fluctuation engineering in cavity quantum materials},
	url = {https://arxiv.org/abs/2604.08666},
	year = {2026}}

@article{Keren:2026aa,
	author = {Keren, Itai and Webb, Tatiana A. and Zhang, Shuai and Xu, Jikai and Sun, Dihao and Kim, Brian S. Y. and Shin, Dongbin and Zhang, Songtian S. and Zhang, Junhe and Pereira, Giancarlo and Yao, Juntao and Okugawa, Takuya and Michael, Marios H. and Vi{\~n}as Bostr{\"o}m, Emil and Edgar, James H. and Wolf, Stuart and Julian, Matthew and Prasankumar, Rohit P. and Miyagawa, Kazuya and Kanoda, Kazushi and Gu, Genda and Cothrine, Matthew and Mandrus, David and Buzzi, Michele and Cavalleri, Andrea and Dean, Cory R. and Kennes, Dante M. and Millis, Andrew J. and Li, Qiang and Sentef, Michael A. and Rubio, Angel and Pasupathy, Abhay N. and Basov, D. N.},
	date = {2026/02/01},
	doi = {10.1038/s41586-025-10062-6},
	id = {Keren2026},
	isbn = {1476-4687},
	journal = {Nature},
	number = {8103},
	pages = {864--868},
	title = {Cavity-altered superconductivity},
	url = {https://doi.org/10.1038/s41586-025-10062-6},
	volume = {650},
	year = {2026}}

@article{Philbin2010,
	author = {Philbin, T G},
	date = {2010/12/07},
	doi = {10.1088/1367-2630/12/12/123008},
	isbn = {1367-2630;},
	journal = {New Journal of Physics},
	number = {12},
	pages = {123008},
	title = {Canonical quantization of macroscopic electromagnetism},
	url = {https://doi.org/10.1088/1367-2630/12/12/123008},
	volume = {12},
	year = {2010}}

@article{Svendsen2021,
	author = {Svendsen, Mark Kamper and Kurman, Yaniv and Schmidt, Peter and Koppens, Frank and Kaminer, Ido and Thygesen, Kristian S.},
	date = {2021/05/13},
	doi = {10.1038/s41467-021-23012-3},
	id = {Svendsen2021},
	isbn = {2041-1723},
	journal = {Nature Communications},
	number = {1},
	pages = {2778},
	title = {Combining density functional theory with macroscopic QED for quantum light-matter interactions in 2D materials},
	url = {https://doi.org/10.1038/s41467-021-23012-3},
	volume = {12},
	year = {2021}}

@article{Greffet2002,
	author = {Greffet, Jean-Jacques and Carminati, R{\'e}mi and Joulain, Karl and Mulet, Jean-Philippe and Mainguy, St{\'e}phane and Chen, Yong},
	date = {2002/03/01},
	doi = {10.1038/416061a},
	id = {Greffet2002},
	isbn = {1476-4687},
	journal = {Nature},
	number = {6876},
	pages = {61--64},
	title = {Coherent emission of light by thermal sources},
	url = {https://doi.org/10.1038/416061a},
	volume = {416},
	year = {2002}}

@article{Taubner2006,
	author = {Taubner, Thomas and Korobkin, Dmitriy and Urzhumov, Yaroslav and Shvets, Gennady and Hillenbrand, Rainer},
	date = {2006/09/15},
	doi = {10.1126/science.1131025},
	journal = {Science},
	journal1 = {Science},
	journal2 = {Science},
	month = {2026/07/11},
	number = {5793},
	pages = {1595--1595},
	publisher = {American Association for the Advancement of Science},
	title = {Near-Field Microscopy Through a SiC Superlens},
	type = {doi: 10.1126/science.1131025},
	url = {https://doi.org/10.1126/science.1131025},
	volume = {313},
	year = {2006},
	year1 = {2006}}

@article{Caldwell2015,
	author = {Caldwell, Joshua D. and Lindsay, Lucas and Giannini, Vincenzo and Vurgaftman, Igor and Reinecke, Thomas L. and Maier, Stefan A. and Glembocki, Orest J.},
	doi = {doi:10.1515/nanoph-2014-0003},
	journal = {Nanophotonics},
	month = {2026-07-11},
	number = {1},
	pages = {44--68},
	title = {Low-loss, infrared and terahertz nanophotonics using surface phonon polaritons},
	url = {https://doi.org/10.1515/nanoph-2014-0003},
	volume = {4},
	year = {2015}}

@article{Hillenbrand2002,
	author = {Hillenbrand, R. and Taubner, T. and Keilmann, F.},
	date = {2002/07/01},
	doi = {10.1038/nature00899},
	id = {Hillenbrand2002},
	isbn = {1476-4687},
	journal = {Nature},
	number = {6894},
	pages = {159--162},
	title = {Phonon-enhanced light--matter interaction at the nanometre scale},
	url = {https://doi.org/10.1038/nature00899},
	volume = {418},
	year = {2002}}

@article{Schlawin2022,
	author = {Schlawin, F. and Kennes, D. M. and Sentef, M. A.},
	doi = {10.1063/5.0083825},
	isbn = {1931-9401},
	journal = {Applied Physics Reviews},
	journal1 = {Appl. Phys. Rev.},
	month = {7/11/2026},
	number = {1},
	pages = {011312},
	title = {Cavity quantum materials},
	url = {https://doi.org/10.1063/5.0083825},
	volume = {9},
	year = {2022},
	year1 = {2022/02/25}}

@article{Hubener2021,
	author = {H{\"u}bener, Hannes and De Giovannini, Umberto and Sch{\"a}fer, Christian and Andberger, Johan and Ruggenthaler, Michael and Faist, Jerome and Rubio, Angel},
	date = {2021/04/01},
	doi = {10.1038/s41563-020-00801-7},
	id = {H{\"u}bener2021},
	isbn = {1476-4660},
	journal = {Nature Materials},
	number = {4},
	pages = {438--442},
	title = {Engineering quantum materials with chiral optical cavities},
	url = {https://doi.org/10.1038/s41563-020-00801-7},
	volume = {20},
	year = {2021}}

@article{Weight:2023aa,
	author = {Weight, Braden M. and Li, Xinyang and Zhang, Yu},
	doi = {10.1039/d3cp01415k},
	isbn = {1463-9076},
	journal = {Physical Chemistry Chemical Physics},
	journal1 = {Phys. Chem. Chem. Phys.},
	month = {7/10/2026},
	number = {46},
	pages = {31554--31577},
	title = {Theory and modeling of light-matter interactions in chemistry: current and future},
	url = {https://doi.org/10.1039/d3cp01415k},
	volume = {25},
	year = {2023},
	year1 = {2023/12/14}}

@misc{mazin2024,
	archiveprefix = {arXiv},
	author = {Ilia Mazin and Yu Zhang},
	eprint = {2411.15022},
	primaryclass = {quant-ph},
	title = {Light-Matter Hybridization and Entanglement from the First-Principles},
	url = {https://arxiv.org/abs/2411.15022},
	year = {2024}}

@article{Bauman:2025aa,
	author = {Bauman, Nicholas and Cunha, Leonardo A. and DePrince, A. Eugene III and Flick, Johannes and Foley, Jonathan J. IV and Govind, Niranjan and Groenhof, Gerrit and Hoffmann, Norah and Kowalski, Karol and Li, Xiaosong and Liebenthal, Marcus and Maitra, Neepa T. and Manderna, Ruby and Matou{\v s}ek, Mikul{\'a}{\v s} and Mazin, Ilia M. and Mejia-Rodriguez, Daniel and Panyala, Ajay and Peng, Bo and Peyton, Benjamin and Veis, Libor and Vu, Nam and Weidman, Jared D. and Wilson, Angela K. and Zarotiadis, Rhiannon A. and Zhang, Yu},
	date = {2025/10/28},
	doi = {10.1021/acs.jctc.5c00801},
	isbn = {1549-9618},
	journal = {Journal of Chemical Theory and Computation},
	journal1 = {Journal of Chemical Theory and Computation},
	journal2 = {J. Chem. Theory Comput.},
	month = {10},
	number = {20},
	pages = {10035--10067},
	publisher = {American Chemical Society},
	title = {Perspective on Many-Body Methods for Molecular Polaritonic Systems},
	type = {doi: 10.1021/acs.jctc.5c00801},
	url = {https://doi.org/10.1021/acs.jctc.5c00801},
	volume = {21},
	year = {2025},
	year1 = {2025}}

@misc{li2023,
	archiveprefix = {arXiv},
	author = {Xinyang Li and Yu Zhang},
	eprint = {2310.18228},
	primaryclass = {physics.chem-ph},
	title = {First-principles molecular quantum electrodynamics theory at all coupling strengths},
	url = {https://arxiv.org/abs/2310.18228},
	year = {2023}}

@article{Mott1968,
	author = {MOTT, N. F.},
	doi = {10.1103/RevModPhys.40.677},
	issue = {4},
	journal = {Rev. Mod. Phys.},
	month = {Oct},
	numpages = {0},
	pages = {677--683},
	publisher = {American Physical Society},
	title = {Metal-Insulator Transition},
	url = {https://link.aps.org/doi/10.1103/RevModPhys.40.677},
	volume = {40},
	year = {1968}}

@article{Imada1998,
	author = {Imada, Masatoshi and Fujimori, Atsushi and Tokura, Yoshinori},
	doi = {10.1103/RevModPhys.70.1039},
	issue = {4},
	journal = {Rev. Mod. Phys.},
	month = {Oct},
	numpages = {0},
	pages = {1039--1263},
	publisher = {American Physical Society},
	title = {Metal-insulator transitions},
	url = {https://link.aps.org/doi/10.1103/RevModPhys.70.1039},
	volume = {70},
	year = {1998}}

@article{Hubbard1963,
	author = {Hubbard, J.},
	journal = {Proc. R. Soc. A},
	pages = {238--257},
	title = {Electron correlations in narrow energy bands},
	volume = {276},
	year = {1963}}

@article{Gutzwiller1963,
	author = {Gutzwiller, Martin C.},
	doi = {10.1103/PhysRevLett.10.159},
	issue = {5},
	journal = {Phys. Rev. Lett.},
	month = {Mar},
	numpages = {0},
	pages = {159--162},
	publisher = {American Physical Society},
	title = {Effect of Correlation on the Ferromagnetism of Transition Metals},
	url = {https://link.aps.org/doi/10.1103/PhysRevLett.10.159},
	volume = {10},
	year = {1963}}

@article{Gutzwiller1964,
	author = {Gutzwiller, Martin C.},
	doi = {10.1103/PhysRev.134.A923},
	issue = {4A},
	journal = {Phys. Rev.},
	month = {May},
	numpages = {0},
	pages = {A923--A941},
	publisher = {American Physical Society},
	title = {Effect of Correlation on the Ferromagnetism of Transition Metals},
	url = {https://link.aps.org/doi/10.1103/PhysRev.134.A923},
	volume = {134},
	year = {1964}}

@article{Gutzwiller1965,
	author = {Gutzwiller, Martin C.},
	doi = {10.1103/PhysRev.137.A1726},
	issue = {6A},
	journal = {Phys. Rev.},
	month = {Mar},
	numpages = {0},
	pages = {A1726--A1735},
	publisher = {American Physical Society},
	title = {Correlation of Electrons in a Narrow $s$ Band},
	url = {https://link.aps.org/doi/10.1103/PhysRev.137.A1726},
	volume = {137},
	year = {1965}}

@article{BrinkmanRice1970,
	author = {Brinkman, W. F. and Rice, T. M.},
	doi = {10.1103/PhysRevB.2.4302},
	issue = {10},
	journal = {Phys. Rev. B},
	month = {Nov},
	numpages = {0},
	pages = {4302--4304},
	publisher = {American Physical Society},
	title = {Application of Gutzwiller's Variational Method to the Metal-Insulator Transition},
	url = {https://link.aps.org/doi/10.1103/PhysRevB.2.4302},
	volume = {2},
	year = {1970}}

@article{MetznerVollhardt1989,
	author = {Metzner, Walter and Vollhardt, Dieter},
	doi = {10.1103/PhysRevLett.62.324},
	issue = {3},
	journal = {Phys. Rev. Lett.},
	month = {Jan},
	numpages = {0},
	pages = {324--327},
	publisher = {American Physical Society},
	title = {Correlated Lattice Fermions in $d=\ensuremath{\infty}$ Dimensions},
	url = {https://link.aps.org/doi/10.1103/PhysRevLett.62.324},
	volume = {62},
	year = {1989}}

@article{Gebhard1990,
	author = {Gebhard, Florian},
	doi = {10.1103/PhysRevB.41.9452},
	issue = {13},
	journal = {Phys. Rev. B},
	month = {May},
	numpages = {0},
	pages = {9452--9473},
	publisher = {American Physical Society},
	title = {Gutzwiller correlated wave functions in finite dimensions d: A systematic expansion in 1/d},
	url = {https://link.aps.org/doi/10.1103/PhysRevB.41.9452},
	volume = {41},
	year = {1990}}

@article{Bunemann1998,
	author = {B\"unemann, J. and Weber, W. and Gebhard, F.},
	doi = {10.1103/PhysRevB.57.6896},
	issue = {12},
	journal = {Phys. Rev. B},
	month = {Mar},
	numpages = {0},
	pages = {6896--6916},
	publisher = {American Physical Society},
	title = {Multiband Gutzwiller wave functions for general on-site interactions},
	url = {https://link.aps.org/doi/10.1103/PhysRevB.57.6896},
	volume = {57},
	year = {1998}}

@article{Georges1996,
	author = {Georges, Antoine and Kotliar, Gabriel and Krauth, Werner and Rozenberg, Marcelo J.},
	doi = {10.1103/RevModPhys.68.13},
	issue = {1},
	journal = {Rev. Mod. Phys.},
	month = {Jan},
	numpages = {0},
	pages = {13--125},
	publisher = {American Physical Society},
	title = {Dynamical mean-field theory of strongly correlated fermion systems and the limit of infinite dimensions},
	url = {https://link.aps.org/doi/10.1103/RevModPhys.68.13},
	volume = {68},
	year = {1996}}

@article{Kotliar2006,
	author = {Kotliar, G. and Savrasov, S. Y. and Haule, K. and Oudovenko, V. S. and Parcollet, O. and Marianetti, C. A.},
	doi = {10.1103/RevModPhys.78.865},
	issue = {3},
	journal = {Rev. Mod. Phys.},
	month = {Aug},
	numpages = {0},
	pages = {865--951},
	publisher = {American Physical Society},
	title = {Electronic structure calculations with dynamical mean-field theory},
	url = {https://link.aps.org/doi/10.1103/RevModPhys.78.865},
	volume = {78},
	year = {2006}}

@article{Lechermann2007,
	author = {Lechermann, Frank and Georges, Antoine and Kotliar, Gabriel and Parcollet, Olivier},
	doi = {10.1103/PhysRevB.76.155102},
	issue = {15},
	journal = {Phys. Rev. B},
	month = {Oct},
	numpages = {20},
	pages = {155102},
	publisher = {American Physical Society},
	title = {Rotationally invariant slave-boson formalism and momentum dependence of the quasiparticle weight},
	url = {https://link.aps.org/doi/10.1103/PhysRevB.76.155102},
	volume = {76},
	year = {2007}}

@article{Lanata2015,
	author = {Lanat\`a, Nicola and Deng, Xiaoyu and Kotliar, Gabriel},
	doi = {10.1103/PhysRevB.92.081108},
	issue = {8},
	journal = {Phys. Rev. B},
	month = {Aug},
	numpages = {5},
	pages = {081108(R)},
	publisher = {American Physical Society},
	title = {Finite-temperature Gutzwiller approximation from the time-dependent variational principle},
	url = {https://link.aps.org/doi/10.1103/PhysRevB.92.081108},
	volume = {92},
	year = {2015}}

@article{Goldstein2020,
	author = {Goldstein, Garry and Lanata, Nicola and Kotliar, Gabriel},
	doi = {10.1103/PhysRevB.102.045152},
	issue = {4},
	journal = {Phys. Rev. B},
	month = {Jul},
	numpages = {38},
	pages = {045152},
	publisher = {American Physical Society},
	title = {Extending the Gutzwiller approximation to intersite interactions},
	url = {https://link.aps.org/doi/10.1103/PhysRevB.102.045152},
	volume = {102},
	year = {2020}}

@article{Barone2007,
	author = {Barone, P. and Raimondi, R. and Capone, M. and Castellani, C. and Fabrizio, M.},
	doi = {10.1209/0295-5075/79/47003},
	journal = {Europhysics Letters},
	month = {jul},
	number = {4},
	pages = {47003},
	title = {Extended Gutzwiller wave function for the Hubbard-Holstein model},
	url = {https://doi.org/10.1209/0295-5075/79/47003},
	volume = {79},
	year = {2007}}

@article{Barone2008,
	author = {Barone, P. and Raimondi, R. and Capone, M. and Castellani, C. and Fabrizio, M.},
	doi = {10.1103/PhysRevB.77.235115},
	issue = {23},
	journal = {Phys. Rev. B},
	month = {Jun},
	numpages = {10},
	pages = {235115},
	publisher = {American Physical Society},
	title = {Gutzwiller scheme for electrons and phonons: The half-filled Hubbard-Holstein model},
	url = {https://link.aps.org/doi/10.1103/PhysRevB.77.235115},
	volume = {77},
	year = {2008}}

@article{LangFirsov1963,
	author = {Lang, I. G. and Firsov, Y. A.},
	journal = {Sov. Phys. JETP},
	pages = {1301--1312},
	title = {Kinetic theory of semiconductors with low mobility},
	volume = {16},
	year = {1963}}

@article{HongHirsch1990,
	author = {Hong, X. Q. and Hirsch, J. E.},
	doi = {10.1103/PhysRevB.41.4410},
	issue = {7},
	journal = {Phys. Rev. B},
	month = {Mar},
	numpages = {0},
	pages = {4410--4415},
	publisher = {American Physical Society},
	title = {Study of the accuracy of the Gutzwiller wave function for the two-dimensional Hubbard model},
	url = {https://link.aps.org/doi/10.1103/PhysRevB.41.4410},
	volume = {41},
	year = {1990}}

@article{Foulkes2001,
	author = {Foulkes, W. M. C. and Mitas, L. and Needs, R. J. and Rajagopal, G.},
	doi = {10.1103/RevModPhys.73.33},
	issue = {1},
	journal = {Rev. Mod. Phys.},
	month = {Jan},
	numpages = {0},
	pages = {33--83},
	publisher = {American Physical Society},
	title = {Quantum Monte Carlo simulations of solids},
	url = {https://link.aps.org/doi/10.1103/RevModPhys.73.33},
	volume = {73},
	year = {2001}}

@article{Ruggenthaler2014,
	author = {Ruggenthaler, Michael and Flick, Johannes and Pellegrini, Camilla and Appel, Heiko and Tokatly, Ilya V. and Rubio, Angel},
	doi = {10.1103/PhysRevA.90.012508},
	issue = {1},
	journal = {Phys. Rev. A},
	month = {Jul},
	numpages = {26},
	pages = {012508},
	publisher = {American Physical Society},
	title = {Quantum-electrodynamical density-functional theory: Bridging quantum optics and electronic-structure theory},
	url = {https://link.aps.org/doi/10.1103/PhysRevA.90.012508},
	volume = {90},
	year = {2014}}

@article{Flick2017,
	author = {Johannes Flick and Michael Ruggenthaler and Heiko Appel and Angel Rubio},
	doi = {10.1073/pnas.1615509114},
	eprint = {https://www.pnas.org/doi/pdf/10.1073/pnas.1615509114},
	journal = {Proceedings of the National Academy of Sciences},
	number = {12},
	pages = {3026-3034},
	title = {Atoms and molecules in cavities, from weak to strong coupling in quantum-electrodynamics (QED) chemistry},
	url = {https://www.pnas.org/doi/abs/10.1073/pnas.1615509114},
	volume = {114},
	year = {2017}}

@article{Schafer2018,
	author = {Sch\"afer, Christian and Ruggenthaler, Michael and Rubio, Angel},
	doi = {10.1103/PhysRevA.98.043801},
	issue = {4},
	journal = {Phys. Rev. A},
	month = {Oct},
	numpages = {24},
	pages = {043801},
	publisher = {American Physical Society},
	title = {Ab initio nonrelativistic quantum electrodynamics: Bridging quantum chemistry and quantum optics from weak to strong coupling},
	url = {https://link.aps.org/doi/10.1103/PhysRevA.98.043801},
	volume = {98},
	year = {2018}}

@article{Schafer2020,
	author = {Sch{\"a}fer, Christian and Ruggenthaler, Michael and Rokaj, Vasil and Rubio, Angel},
	doi = {10.1021/acsphotonics.9b01649},
	eprint = {https://doi.org/10.1021/acsphotonics.9b01649},
	journal = {ACS Photonics},
	number = {4},
	pages = {975-990},
	title = {Relevance of the Quadratic Diamagnetic and Self-Polarization Terms in Cavity Quantum Electrodynamics},
	url = {https://doi.org/10.1021/acsphotonics.9b01649},
	volume = {7},
	year = {2020}}

@article{Sentef2018,
	author = {Sentef, M. A. and Ruggenthaler, M. and Rubio, A.},
	journal = {Sci. Adv.},
	pages = {eaau6969},
	title = {Cavity quantum-electrodynamical polaritonically enhanced electron-phonon coupling and its influence on superconductivity},
	volume = {4},
	year = {2018}}

@article{Curtis2019,
	author = {Curtis, Jonathan B. and Raines, Zachary M. and Allocca, Andrew A. and Hafezi, Mohammad and Galitski, Victor M.},
	doi = {10.1103/PhysRevLett.122.167002},
	issue = {16},
	journal = {Phys. Rev. Lett.},
	month = {Apr},
	numpages = {5},
	pages = {167002},
	publisher = {American Physical Society},
	title = {Cavity Quantum Eliashberg Enhancement of Superconductivity},
	url = {https://link.aps.org/doi/10.1103/PhysRevLett.122.167002},
	volume = {122},
	year = {2019}}

@article{Kiffner2019,
	author = {Kiffner, Martin and Coulthard, Jonathan R. and Schlawin, Frank and Ardavan, Arzhang and Jaksch, Dieter},
	doi = {10.1103/PhysRevB.99.085116},
	issue = {8},
	journal = {Phys. Rev. B},
	month = {Feb},
	numpages = {9},
	pages = {085116},
	publisher = {American Physical Society},
	title = {Manipulating quantum materials with quantum light},
	url = {https://link.aps.org/doi/10.1103/PhysRevB.99.085116},
	volume = {99},
	year = {2019}}

@article{Mazza2019,
	author = {Mazza, Giacomo and Georges, Antoine},
	doi = {10.1103/PhysRevLett.122.017401},
	issue = {1},
	journal = {Phys. Rev. Lett.},
	month = {Jan},
	numpages = {6},
	pages = {017401},
	publisher = {American Physical Society},
	title = {Superradiant Quantum Materials},
	url = {https://link.aps.org/doi/10.1103/PhysRevLett.122.017401},
	volume = {122},
	year = {2019}}

@article{Andolina2019,
	author = {Andolina, G. M. and Pellegrino, F. M. D. and Giovannetti, V. and MacDonald, A. H. and Polini, M.},
	doi = {10.1103/PhysRevB.100.121109},
	issue = {12},
	journal = {Phys. Rev. B},
	month = {Sep},
	numpages = {6},
	pages = {121109(R)},
	publisher = {American Physical Society},
	title = {Cavity quantum electrodynamics of strongly correlated electron systems: A no-go theorem for photon condensation},
	url = {https://link.aps.org/doi/10.1103/PhysRevB.100.121109},
	volume = {100},
	year = {2019}}

@article{Guerci2020,
	author = {Guerci, Daniele and Simon, Pascal and Mora, Christophe},
	doi = {10.1103/PhysRevLett.125.257604},
	issue = {25},
	journal = {Phys. Rev. Lett.},
	month = {Dec},
	numpages = {6},
	pages = {257604},
	publisher = {American Physical Society},
	title = {Superradiant Phase Transition in Electronic Systems and Emergent Topological Phases},
	url = {https://link.aps.org/doi/10.1103/PhysRevLett.125.257604},
	volume = {125},
	year = {2020}}

@article{Cui2024,
	author = {Cui, Z.-H. and Mandal, A. and Reichman, D. R.},
	journal = {J. Chem. Theory Comput.},
	pages = {1143--1155},
	title = {Variational {Lang--Firsov} approach plus {M{\o}ller--Plesset} perturbation theory with applications to ab initio polariton chemistry},
	volume = {20},
	year = {2024}}

@article{Nakamoto2025,
	author = {Nakamoto, Taiga and Takasan, Kazuaki and Tsuji, Naoto},
	doi = {10.1103/pxkp-trx5},
	issue = {15},
	journal = {Phys. Rev. B},
	month = {Oct},
	numpages = {13},
	pages = {155150},
	publisher = {American Physical Society},
	title = {One-dimensional extended Hubbard model coupled with an optical cavity},
	url = {https://link.aps.org/doi/10.1103/pxkp-trx5},
	volume = {112},
	year = {2025}}

@article{GrunwaldSpectroscopy2025,
	author = {Grunwald, Lukas and Bostr\"om, Emil Vi\~nas and Svendsen, Mark Kamper and Kennes, Dante M. and Rubio, Angel},
	doi = {10.1103/1lpw-22np},
	issue = {24},
	journal = {Phys. Rev. Lett.},
	month = {Jun},
	numpages = {7},
	pages = {246901},
	publisher = {American Physical Society},
	title = {Cavity Spectroscopy for Strongly Correlated Polaritonic Systems},
	url = {https://link.aps.org/doi/10.1103/1lpw-22np},
	volume = {134},
	year = {2025}}

@misc{Inacio2026,
	archiveprefix = {arXiv},
	author = {In{\'a}cio, J. C. and Costa, N. C. and Assaad, F. F.},
	eprint = {2606.03733},
	title = {Mott transition of photons: Quantum {Monte Carlo} study of {Gross--Neveu} criticality in a cavity},
	year = {2026}}

@misc{GrunwaldMultimode2026,
	archiveprefix = {arXiv},
	author = {Grunwald, L. and Cheng, X. and Vi{\~n}as Bostr{\"o}m, E. and Ruggenthaler, M. and Michael, M. H. and Kennes, D. M. and Rubio, A.},
	eprint = {2603.18933},
	title = {Cavity control of strongly correlated electrons beyond resonant coupling},
	year = {2026}}

@article{Dung1998,
	author = {Dung, Ho Trung and Kn\"oll, Ludwig and Welsch, Dirk-Gunnar},
	doi = {10.1103/PhysRevA.57.3931},
	issue = {5},
	journal = {Phys. Rev. A},
	month = {May},
	numpages = {0},
	pages = {3931--3942},
	publisher = {American Physical Society},
	title = {Three-dimensional quantization of the electromagnetic field in dispersive and absorbing inhomogeneous dielectrics},
	url = {https://link.aps.org/doi/10.1103/PhysRevA.57.3931},
	volume = {57},
	year = {1998}}

@article{ScheelBuhmann2008,
	author = {Scheel, S. and Buhmann, S. Y.},
	journal = {Acta Phys. Slovaca},
	pages = {675--809},
	title = {Macroscopic quantum electrodynamics---concepts and applications},
	volume = {58},
	year = {2008}}

@article{Tiwald1999,
	author = {Tiwald, Thomas E. and Woollam, John A. and Zollner, Stefan and Christiansen, Jim and Gregory, R. B. and Wetteroth, T. and Wilson, S. R. and Powell, Adrian R.},
	doi = {10.1103/PhysRevB.60.11464},
	issue = {16},
	journal = {Phys. Rev. B},
	month = {Oct},
	numpages = {0},
	pages = {11464--11474},
	publisher = {American Physical Society},
	title = {Carrier concentration and lattice absorption in bulk and epitaxial silicon carbide determined using infrared ellipsometry},
	url = {https://link.aps.org/doi/10.1103/PhysRevB.60.11464},
	volume = {60},
	year = {1999}}

@article{Eckhardt2025,
	author = {Eckhardt, Christian J. and Grankin, Andrey and Kennes, Dante M. and Ruggenthaler, Michael and Rubio, Angel and Sentef, Michael A. and Hafezi, Mohammad and Michael, Marios H.},
	doi = {10.1103/2fw2-lbhy},
	issue = {15},
	journal = {Phys. Rev. Lett.},
	month = {Oct},
	numpages = {7},
	pages = {156902},
	publisher = {American Physical Society},
	title = {Surface-Mediated Ultrastrong Cavity Coupling of Two-Dimensional Itinerant Electrons},
	url = {https://link.aps.org/doi/10.1103/2fw2-lbhy},
	volume = {135},
	year = {2025}}
